\documentclass[11pt]{article}

\usepackage{microtype}
\usepackage[dvipsnames,svgnames,x11names]{xcolor}
\usepackage{amsmath, amsthm, thm-restate}
\usepackage{xspace}
\usepackage{float}
\usepackage[T1]{fontenc}
\usepackage{mdframed}
\usepackage{mathtools}
\usepackage{enumitem}
\usepackage{titlesec}
\usepackage[font=small]{caption}

\titlespacing*{\paragraph}{0pt}{1.5ex plus 0.5ex minus 0.2ex}{1em}

\usepackage{authblk}

\usepackage{fullpage}
\usepackage{thmtools}
\usepackage[colorlinks,hypertexnames = false]{hyperref}
\hypersetup{colorlinks={true},linkcolor={blue},citecolor=magenta}

\usepackage{tikz}
\usetikzlibrary{calc}
\usepackage{pgfcalendar}

\usepackage{amssymb,amsfonts,amsmath,amsthm,amscd,dsfont,mathrsfs}
\usepackage{complexity}
\usepackage{multirow}
\usepackage{mathpazo}
\usepackage{braket}
\usepackage{comment}
\usepackage{quantikz}
\usepackage[
backend=biber,
style=alphabetic,
maxalphanames=4,
sorting=ynt,
maxnames=99
]{biblatex}
\usepackage{url}
\usepackage{bbm}
\usepackage[capitalize,nameinlink]{cleveref}
\usepackage{hyperref} 
\usepackage[margin=1in]{geometry}
\hypersetup{breaklinks=true}

\theoremstyle{plain}
\newtheorem{theorem}{Theorem}[section]
\newtheorem{lemma}[theorem]{Lemma}
\newtheorem*{claim*}{Claim}
\newtheorem{corollary}[theorem]{Corollary}
\newtheorem{definition}[theorem]{Definition}
\newtheorem{proposition}[theorem]{Proposition}
\newtheorem{remark}[theorem]{Remark}

\newtheorem*{theorem*}{Theorem}
\newtheorem*{definition*}{Definition}

\let\originalleft\left
\let\originalright\right
\renewcommand{\left}{\mathopen{}\mathclose\bgroup\originalleft}
\renewcommand{\right}{\aftergroup\egroup\originalright}

\newtheoremstyle{boxed}{-\topsep}{}{\itshape}{}{\bfseries}{.}{.5em}{}
\theoremstyle{boxed}

\newmdtheoremenv[backgroundcolor=gray!10,
                 linewidth=0pt,
                 innerleftmargin=4pt,
                 innerrightmargin=4pt,
                 innertopmargin=4pt,
                 innerbottommargin=4pt,
            splitbottomskip=4pt,
            nobreak=true]{problem}[prob]{Problem}

\renewcommand{\set}[1]{\left\{ #1 \right\}}

\mathchardef\mhyphen="2D

\newcommand{\Q}{\mathsf{Q}}

\renewcommand{\C}{\mathbb{C}}

\newcommand{\bit}{\{0,1\}}

\title{Optimal Quantum Algorithms for Ordered Search}
\author{Joseph Carolan and Andrew M.\ Childs}

\affil{Department of Computer Science, Institute for Advanced Computer Studies, and \\
Joint Center for Quantum Information and Computer Science,
University of Maryland}

\date{\vspace{-2em}}

\begin{document}

\maketitle

\begin{abstract}
    Ordered search is the problem of locating a target element in a sorted list of size $n$ using comparison queries. Classically, binary search requires $\lceil \log_2 n\rceil$ queries, which is optimal. Quantum algorithms offer a constant-factor speedup, but the precise constant has been a longstanding open question.
    We close this gap by exhibiting two new quantum algorithms for ordered search, each using the optimal $\frac{1}{\pi}\ln n+o(\log n)$ queries. The first, discovered by Claude Fable 5, is a simple zero-error algorithm derived from a continuum relaxation of the problem. The second, discovered by GPT-5.6-Sol (informed by Claude's zero-error algorithm), is an exact algorithm based on an analytic solution of the polynomial program of Farhi, Goldstone, Gutmann, and Sipser.

\end{abstract}

\section{Introduction}
\label{sec:intro}

Ordered search is the problem of locating a target element in a sorted list using comparison queries.  Classically, this fundamental problem is completely understood: binary
search finds the target in a length $n$ list using $\lceil\log_2n\rceil$ comparisons, and the information-theoretic bound of $\log_2n$ shows that this is optimal in the leading constant.

The quantum complexity of ordered search has resisted a similarly
complete answer for over twenty-five years.  Quantum algorithms can beat
binary search by at most a constant factor, but the value of the optimal
constant has remained open. While a constant-factor quantum speedup is unlikely to be practically relevant, the problem is fundamental and provides a natural testbed for developing new quantum algorithms beyond the scope of prior techniques.

On the algorithmic side, Farhi, Goldstone, Gutmann, and
Sipser introduced the class of \emph{translation-invariant} algorithms,
characterized exact algorithms in this class by a feasibility program
over nonnegative Laurent polynomials, and numerically found a $3$-query
algorithm searching $52$ elements, giving a recursive algorithm searching a length-$n$ list with
$3\log_{52}n\approx0.526\log_2n$ queries~\cite{FarhiGoldstoneGutmannSipser1999}. H\o yer, Neerbek, and Shi gave an exact
algorithm based on a different method using $\log_3n\approx0.631\log_2n$
queries~\cite{HoyerNeerbekShi2002}, though it performs worse.
Subsequent improvements followed the template of numerically
finding a small translation-invariant algorithm and recursing: Brookes, Jacokes, and Landahl
found a $4$-query algorithm for $550$ elements via gradient
descent~\cite{BrookesJacokesLandahl2004}; Childs, Landahl, and Parrilo
reformulated the feasibility program as a semidefinite program and
searched $605$ elements with $4$ queries, giving
$4\log_{605}n\approx0.433\log_2n$~\cite{ChildsLandahlParrilo2007}; Carolan, Childs, Kovacs-Deak, and Schaeffer recently used a linear
programming relaxation to show that the largest list searchable with
$5$ queries has exactly $7265$ elements, giving  $5\log_{7265}n\approx0.390\log_2n$~\cite{CarolanChildsKovacsDeakSchaeffer2025}; and Wu et al.\ extended that approach using GPU acceleration to give the best known exact upper bound of $6\log_{90000} n \approx 0.365 \log_2 n$~\cite{wu2026matrixfree}.
Relaxing exactness, Ben-Or and Hassidim gave a zero-error algorithm
using roughly $0.323\log_2n$ queries in
expectation~\cite{BenOrHassidim2007}.

On the lower-bound side, Buhrman and de Wolf proved
$\Omega(\sqrt{\log n}/\log\log n)$~\cite{BuhrmanDeWolf1999}, improved to
$\Omega(\log n/\log\log n)$ by Farhi, Goldstone, Gutmann, and
Sipser~\cite{FarhiGoldstoneGutmannSipser1998} and to
$\frac1{12}\log_2n$ by Ambainis~\cite{Ambainis1999}.  Using the
adversary method~\cite{Ambainis2002}, H\o yer, Neerbek, and Shi proved
the best known lower bound of
\[
    \frac1\pi(\ln n-1)\approx0.221\log_2n,
\]
which holds for exact algorithms and degrades multiplicatively with
error~\cite{HoyerNeerbekShi2002}.  Childs and Lee later showed that this
is optimal for the adversary method: the adversary value of ordered
search is $\frac2\pi\ln n\pm O(1)$, even for the negative adversary, so no adversary-type argument can improve the constant~\cite{ChildsLee2008}.  They speculated that the
lower bound is tight and that progress would come from algorithms. This history is depicted in \Cref{fig:history}.

\begin{figure}
    \centering
\definecolor{historyUpper}{RGB}{205,35,35}%
\hypersetup{citecolor=historyUpper}%
\begin{tikzpicture}[
  scale=0.85,
  x=0.58cm, y=7cm,
  font=\fontsize{9}{11}\selectfont,
  line cap=butt, line join=round,
  history upper/.style={draw=historyUpper,line width=1.15pt},
  history lower/.style={draw=black,line width=1.15pt},
  history expected/.style={history upper,dash pattern=on 5pt off 4pt},
  history upper label/.style={text=historyUpper,inner sep=2pt},
  history lower label/.style={text=black,inner sep=2pt},
  history leader/.style={line width=0.45pt}
]

\def\HistoryOriginDate{1998-01-01}
\def\HistoryEndDate{2027-04-01}

\def\HistoryDateBdW{1998-11-18}
\def\HistoryDateFGGSLower{1998-12-18}
\def\HistoryDateFGGSUpper{1999-01-19}
\def\HistoryDateAmbainis{1999-02-14}
\def\HistoryDateHNS{2000-09-08}
\def\HistoryDateBJL{2004-07-01}
\def\HistoryDateCLP{2006-08-21}
\def\HistoryDateBH{2007-03-24}
\def\HistoryDateCCKS{2025-03-27}
\def\HistoryDateWu{2026-05-05}
\def\HistoryDateThisWork{2026-10-01}

\pgfmathsetmacro{\HistoryCClassical}{1}
\pgfmathsetmacro{\HistoryCSublog}{0}
\pgfmathsetmacro{\HistoryCAmbainis}{1/12}
\pgfmathsetmacro{\HistoryCOptimal}{ln(2)/pi}
\pgfmathsetmacro{\HistoryCFGGS}{3*ln(2)/ln(52)}
\pgfmathsetmacro{\HistoryCHNSUpper}{ln(2)/ln(3)}
\pgfmathsetmacro{\HistoryCBJL}{4*ln(2)/ln(550)}
\pgfmathsetmacro{\HistoryCCLP}{4*ln(2)/ln(605)}
\pgfmathsetmacro{\HistoryCBH}{6/18.5625}
\pgfmathsetmacro{\HistoryCCCKS}{5*ln(2)/ln(7265)}
\pgfmathsetmacro{\HistoryCWu}{6*ln(2)/(ln(9)+4*ln(10))}

\newcount\HistoryOriginJulian
\newcount\HistoryJulian
\pgfcalendardatetojulian{\HistoryOriginDate}{\HistoryOriginJulian}
\newcommand{\HistoryDateToX}[2]{%
  \pgfcalendardatetojulian{#2}{\HistoryJulian}%
  \edef\HistoryDayOffset{\number\numexpr\HistoryJulian-\HistoryOriginJulian\relax}%
  \pgfmathsetmacro{#1}{\HistoryDayOffset/365.2425}%
}
\HistoryDateToX{\HistoryXBdW}{\HistoryDateBdW}
\HistoryDateToX{\HistoryXFGGSLower}{\HistoryDateFGGSLower}
\HistoryDateToX{\HistoryXFGGSUpper}{\HistoryDateFGGSUpper}
\HistoryDateToX{\HistoryXAmbainis}{\HistoryDateAmbainis}
\HistoryDateToX{\HistoryXHNS}{\HistoryDateHNS}
\HistoryDateToX{\HistoryXBJL}{\HistoryDateBJL}
\HistoryDateToX{\HistoryXCLP}{\HistoryDateCLP}
\HistoryDateToX{\HistoryXBH}{\HistoryDateBH}
\HistoryDateToX{\HistoryXCCKS}{\HistoryDateCCKS}
\HistoryDateToX{\HistoryXWu}{\HistoryDateWu}
\HistoryDateToX{\HistoryXThisWork}{\HistoryDateThisWork}
\HistoryDateToX{\HistoryXEnd}{\HistoryEndDate}

\coordinate (BdW)       at (\HistoryXBdW,\HistoryCSublog);
\coordinate (FGGSLower) at (\HistoryXFGGSLower,\HistoryCSublog);
\coordinate (FGGSUpper) at (\HistoryXFGGSUpper,\HistoryCFGGS);
\coordinate (Ambainis)  at (\HistoryXAmbainis,\HistoryCAmbainis);
\coordinate (HNSLower)  at (\HistoryXHNS,\HistoryCOptimal);
\coordinate (HNSUpper)  at (\HistoryXHNS,\HistoryCHNSUpper);
\coordinate (BJL)       at (\HistoryXBJL,\HistoryCBJL);
\coordinate (CLP)       at (\HistoryXCLP,\HistoryCCLP);
\coordinate (BH)        at (\HistoryXBH,\HistoryCBH);
\coordinate (CCKS)      at (\HistoryXCCKS,\HistoryCCCKS);
\coordinate (Wu)        at (\HistoryXWu,\HistoryCWu);
\coordinate (ThisWork)  at (\HistoryXThisWork,\HistoryCOptimal);

\draw[line width=0.6pt] (0,1.10) -- (0,-0.10) -- (\HistoryXEnd,-0.10);
\foreach \year in {1998,2002,2006,2010,2014,2018,2022,2026}{%
  \HistoryDateToX{\HistoryXTick}{\year-01-01}%
  \draw[line width=0.55pt] (\HistoryXTick,-0.10) -- ++(0,-0.009);
  \node[anchor=north,inner sep=3pt] at (\HistoryXTick,-0.11) {\year};
}
\foreach \y/\labeltext in {0/0.0,0.2/0.2,0.4/0.4,0.6/0.6,0.8/0.8,1/1.0}{%
  \draw[line width=0.55pt] (0,\y) -- ++(-0.12,0);
  \node[anchor=east,inner sep=3pt] at (-0.12,\y) {\labeltext};
}
\node at ({\HistoryXEnd/2},-0.205) {Year};
\node[rotate=90] at (-1.75,0.50)
  {Query coefficient $c$ in $c\log_2 n$};

\draw[history lower]
  (0,0) -- (\HistoryXAmbainis,0) -- (Ambainis)
  -- (\HistoryXHNS,\HistoryCAmbainis) -- (HNSLower)
  -- (\HistoryXEnd,\HistoryCOptimal);

\draw[history upper]
  (0,\HistoryCClassical) -- (\HistoryXFGGSUpper,\HistoryCClassical)
  -- (FGGSUpper) -- (\HistoryXBJL,\HistoryCFGGS) -- (BJL)
  -- (\HistoryXCLP,\HistoryCBJL) -- (CLP)
  -- (\HistoryXCCKS,\HistoryCCLP) -- (CCKS)
  -- (\HistoryXWu,\HistoryCCCKS) -- (Wu)
  -- (\HistoryXThisWork,\HistoryCWu) -- (ThisWork)
  -- (\HistoryXEnd,\HistoryCOptimal);

\draw[history expected] (\HistoryXBH,\HistoryCCLP) -- (BH)
  -- (\HistoryXThisWork,\HistoryCBH);

\foreach \p in {BdW,FGGSLower,Ambainis,HNSLower}
  \fill[black] (\p) circle[radius=1.6pt];
\foreach \p in {FGGSUpper,BJL,CLP,BH,CCKS,Wu,ThisWork}
  \fill[historyUpper] (\p) circle[radius=1.8pt];
\draw[historyUpper,line width=0.95pt,fill=white]
  (HNSUpper) circle[radius=2pt];

\node[history upper label,anchor=north west]
  at (\HistoryXFGGSUpper,0.515) {\cite{FarhiGoldstoneGutmannSipser1999}};
\node[history upper label,anchor=south]
  at (\HistoryXHNS,0.64) {\cite{HoyerNeerbekShi2002}};
\node[history upper label,anchor=north]
  at (\HistoryXBJL,0.43) {\cite{BrookesJacokesLandahl2004}};
\node[history upper label,anchor=south west]
  at (\HistoryXCLP,0.445) {\cite{ChildsLandahlParrilo2007}};
\node[history upper label,anchor=north east]
  at (\HistoryXBH,0.33) {\cite{BenOrHassidim2007}};

\node[history upper label,anchor=east] (CCKSLabel)
  at (\HistoryXCCKS,0.39) {\cite{CarolanChildsKovacsDeakSchaeffer2025}};
\node[history upper label,anchor=east] (WuLabel)
  at (28,0.55) {\cite{wu2026matrixfree}};
\draw[historyUpper,history leader] (WuLabel.east)
  -- (\HistoryXWu,0.55) -- (Wu);
\node[history upper label,anchor=north east]
  at (\HistoryXThisWork,0.21) {This work};

\begingroup
\hypersetup{citecolor=black}
\node[history lower label,anchor=north west]
  at (.7,-.01)
  {\cite{BuhrmanDeWolf1999,FarhiGoldstoneGutmannSipser1998}};
\node[history lower label,anchor=south]
  at (1.35,0.09) {\cite{Ambainis1999}};
\node[history lower label,anchor=south]
  at (\HistoryXHNS,0.225) {\cite{HoyerNeerbekShi2002}};
\endgroup

\draw[history upper] (17.0,1.045) -- (18.6,1.045);
\node[anchor=west] at (18.75,1.045) {Upper bounds: exact};
\draw[history expected] (17.0,0.983) -- (18.6,0.983);
\node[anchor=west] at (18.75,0.983) {Upper bound: expected zero-error};
\draw[history lower] (17.0,0.921) -- (18.6,0.921);
\node[anchor=west] at (18.75,0.921) {Lower bounds};
\end{tikzpicture}
    \caption{History of quantum algorithms and lower bounds for ordered search.}
    \label{fig:history}
\end{figure}

Notably, most known quantum algorithms for ordered search are obtained by numerically solving a fixed
finite instance and recursing~\cite{FarhiGoldstoneGutmannSipser1999,BrookesJacokesLandahl2004,ChildsLandahlParrilo2007,CarolanChildsKovacsDeakSchaeffer2025,wu2026matrixfree}, which yields query complexity
$k\log_mn=\frac{k}{\ln m}\ln n$ from a finite solution to the size-$m$ problem using $k$ queries. 
Approaching the optimal constant by this route would likely require optimally solving a sequence of finite instances of unbounded size, while
numerical methods already strain at $k=6$~\cite{wu2026matrixfree}.

The zero-error algorithm of Ben-Or and Hassidim~\cite{BenOrHassidim2007} is also based on numerically understanding a finite instance, in this case explicitly calculating the information output of the greedy algorithm of Farhi et al.~\cite{FarhiGoldstoneGutmannSipser1999}. The greedy algorithm is then recursively used inside an error-tolerant classical ordered search algorithm. This strategy again runs into the aforementioned barrier for achieving the optimal constant.

The one exception in the literature is the approach of Hoyer, Neerbek, and Shi~\cite{HoyerNeerbekShi2002}, based on a combinatorial pebbling of a binary tree satisfying certain properties. These properties allow a quantum algorithm to  
solve ordered search by making recursive progress controlled by the number of pebbles used. Given a suitable colored pebbling of an $n$-leaf binary tree using $n'$ pebbles per color, the algorithm succeeds with a query count satisfying the recurrence $T(n) \leq T(n'+1)+1$. However, one can show that at least $n' \geq n/3$ pebbles are necessary to satisfy the constraints, which presents an obstacle to achieving optimality 
using this approach as well.

\paragraph{Our results.}
We give two new quantum algorithms for ordered search, each using
$\frac1\pi\ln n+o(\log n)$ queries.
The first (\Cref{sec:zero-error}) is a zero-error algorithm with
expected query complexity $\frac1\pi\ln n+O\bigl((\ln n)^{3/4}\bigr)$
(\Cref{cor:continuum-query-complexity}).  The algorithm is simple to
state: prepare a fixed, broad wave packet on the real line, alternately apply
the phase oracle in the position basis and a fixed reflection in the
Fourier basis for $\approx\frac1\pi\ln n$ rounds, and measure the position.
The second (\Cref{sec:green-mixture-ordered-search}) is an exact
algorithm using $\frac1\pi\ln n+o(\log n)$ queries in the worst case
(\Cref{thm:ordered-search}).  It is obtained by writing down an
explicit analytic solution path for the polynomial program
of~\cite{FarhiGoldstoneGutmannSipser1999}.

Combined with the lower bound of~\cite{HoyerNeerbekShi2002}, the second result determines
the exact quantum query complexity of ordered search up to lower-order
terms:
\[
    \Q_E(\textsc{OrderedSearch}_n)
      =\frac{\ln n}{\pi}+o(\log n)
      \approx0.221\log_2n.
\]
The same statement holds for the expected query complexity of
a zero-error algorithm \cite{BelovsYolcu2023}.  In particular, the adversary bound is tight for
this problem, confirming the prediction of~\cite{ChildsLee2008}, and
the optimal quantum speedup for ordered search is by the factor
$\pi\log_2e\approx4.53$.

\subsection*{Technical Overview.}

Both algorithms reduce to a one-dimensional family of states, where a certain ``logarithmic coordinate'' parameterizing this family translates by approximately $\pi$ on each query. The unsolved initial state and solved final state are both members of this family and differ by $\ln n + o(\log n)$ in the logarithmic coordinate, so $\approx \frac{1}{\pi} \ln n$ queries suffice.

\paragraph{Claude's algorithm.} Claude's zero-error algorithm arises from a continuum relaxation, taking the state space of the algorithm to be the real line. A continuum oracle can be implemented with one discrete query: coherently compute the integer floor of the continuous position, query the corresponding input bit, and uncompute the index, extending the range of the input by $0$ to the left and $1$ to the right.
In this picture, the oracle becomes a phase flip below an unknown transition point~$s$, which we denote $O_s$. For instance, $O_0$ applies a sign function, negating all components of the state that are below $\ket{0}$. 

The algorithm first prepares a fixed initial state, a wide wave packet on the scale of $n^{1+o(1)}$. This state is approximately evenly spread over transition points $s \in [n]$, reflecting uncertainty over the marked location.\footnote{Formally, we take a packet whose log-position profile is a Gaussian centered at $\ln n + o(\log n)$ and width $\omega(1)$ but $o(\log n)$, though the algorithm would likely work with other choices.} After preparing this initial state, the algorithm repeatedly applies a \emph{driving} unitary $U_s$, which can be implemented in a single query. Specifically, $U_s$ is the composition of the oracle query $O_s$, followed by the sign function $O_0$ applied in the Fourier basis.

We can understand the action of $U_s$ using two symmetries: \begin{enumerate}
    \item Translation invariance. The algorithm's initial state is approximately invariant under translations due to its spread, and the non-oracular component of $U_s$ is exactly invariant, as it is diagonal in the Fourier basis. The oracle $O_s$ is equivariant under translations, in the sense that translating the input by $t$ is equivalent to translating the marked position by $-t$. One can use these facts to show that the success of the algorithm on $s=0$ implies the success of the algorithm on any other marked element.
    \item Scale invariance. The driving operator $U_0$ commutes with dilations of the real number line. The Mellin transform, which is the Fourier transform of log-position, diagonalizes parity-preserving scale-invariant operators. Moreover, due to our choice of a broad initial state, the state in the Mellin basis is highly concentrated near the origin. This concentration is maintained by the algorithm, so we can understand the dispersion relation by a simple first-order Taylor expansion.
\end{enumerate}

Carrying out this calculation, we find that that on the even subspace, the action of $U_0$ near Mellin frequency zero is approximately a phase shear: the mode $\omega$ obtains a phase $\pi \omega$ (modulo an unimportant global phase). As this basis is the Fourier transform of log-position, in log-position this corresponds to a negative translation by $\pi$. This provides a simple characterization of the dynamics: the initial wave packet is concentrated near $\ln n$ in log-position space, and is negatively shifted by $\pi$ on each query. When the packet reaches negative log-position coordinates, this corresponds to a position packet sharply concentrated near zero,\footnote{This is essentially the fact that $\lim_{x \rightarrow -\infty} e^x = 0$.} which represents success of the algorithm.

To summarize, the state of the algorithm is a wave packet in log-position space. This packet begins centered near $\ln n$, representing ignorance of the marked element. As the algorithm makes progress, the center moves to the left, eventually reaching negative values representing strong knowledge of the marked element. Each driving operation moves the packet a distance $\pi$, so $\frac{1}{\pi} \ln n + o(\log n)$ queries suffice to solve ordered search. An example of two translated wave packets in log-position space is depicted in \Cref{fig:log-packets}.

\paragraph{GPT's algorithm.} GPT's exact algorithm directly solves a polynomial program introduced by Farhi et al.~\cite{FarhiGoldstoneGutmannSipser1999}, which they showed implies quantum algorithms for ordered search. The program is a sequence of Laurent polynomials that begins with the so-called Fej\'er kernel, ends with the constant polynomial, and enforces two constraints at each step: \begin{enumerate}
    \item Each Laurent polynomial in the sequence is nonnegative on the unit circle.
    \item A constraint that depends on the parity of the index of the step, with odd-to-even steps fixing the reversal-symmetric (i.e., symmetric under reversing the non-constant coefficients) component and even-to-odd steps fixing the reversal-antisymmetric component.
\end{enumerate}

The solution is built from a one-parameter family of exactly nonnegative Laurent
polynomials, the \emph{solution family} $G_{\lambda,n}$, with hyperbolic
coefficient profile $\sinh(\lambda(1-j/n))/\sinh\lambda$ on the degree-$j$ term. We refer to $\lambda$ as the \emph{rate}.
The solution family includes the Fej\'er and constant polynomials as endpoints, with $\lambda=0$ the Fej\'er kernel and $\lambda=\infty$ the constant polynomial. The sequence corresponding to the algorithm follows a
greedy rule: beginning from the Fej\'er kernel at $\lambda = 0$, split the family into the reversal-symmetric and reversal-antisymmetric component and at each step push the free component as far up in $\lambda$ as nonnegativity allows. Note that the reversal-symmetric and reversal-antisymmetric components come from different polynomials in the family, in particular having different values of $\lambda$: if (suppressing $n$) we have $1 + S_\lambda + A_\lambda = G_\lambda$ for reversal-symmetric $S_\lambda$ and reversal-antisymmetric $A_\lambda$, then the next polynomial in the greedy recurrence will be of the form\footnote{Technically, the polynomial will actually be a convex combination of polynomials of this form.} $1+S_\beta+A_\alpha$. Nonetheless, both $\alpha$ and $\beta$ will approach $\infty$ as the algorithm progresses.

Analyzing this strategy requires showing that the $\alpha, \beta$ variables increase fast enough.  
Taking a suitable mixture of components for the packet,
for each fixed $d<\pi$ sufficiently close to $\pi$,
we show that successive packet centers increase
by at least a factor $e^d$ above a parameter-dependent threshold.
Choosing the mixture parameters to vary sufficiently slowly with $n$
allows $d\to\pi$ while keeping the initialization cost $o(\log n)$.
Once every rate in both component mixtures is at least $2n$, two
final queries suffice to exactly reach the polynomial $1$, making
the algorithm exact.
These results imply that the solution family uses $\frac{1}{\pi} \ln (n) + o(\log n)$ steps.

Showing this mathematically involves reasoning about the nonnegativity of polynomials of the form $1 + S_\beta + A_\alpha$. We first convert nonnegativity into an algebraically cleaner determinant condition on the unit circle. We then use periodization to rewrite the condition in terms of an even simpler convolution. Using this simplified form, we show that a compactly supported mixture with quarticly decaying edges achieves any factor $e^d$ with $d<\pi$
(\Cref{prop:compact-mixtures}).  Endpoint arguments at rate $0$ and at
rates beyond $n$ complete an exact path of length
$\frac1\pi\ln n+o(\log n)$.

The two constructions are informally similar: the
zero-error algorithm moves a wave packet at speed $\pi$ in
log-\emph{position}, while the exact algorithm moves a certificate
packet at speed $\pi$ in log-\emph{rate}. While the exact algorithm is superior in an algorithmic sense, the zero-error algorithm is very simple, and the driving unitary follows naturally from asserting both scale and translation invariance after passing to the continuum picture.

\paragraph{Open problems.}
Our results leave the lower-order behavior open.  The exact complexity
is $\frac1\pi\ln n+o(\log n)$; is it $\frac1\pi\ln n+O(1)$, as the
lower bound $\frac1\pi(\ln n-1)$ permits?  Relatedly, our exact
algorithm is analytic but not explicitly optimized; it would be
interesting to extract algorithms for finite $n$ and compare against prior works
If we allow error at most $\epsilon>0$, the best $\epsilon$-dependent constant in the complexity of ordered search remains open. In particular, the adversary lower bound has a fixed dependence on $\epsilon$ that may not be achievable.
Finally, our algorithms are query-efficient;
we did not attempt to optimize time complexity, and in particular the
exact algorithm relies on Fej\'er--Riesz factorization of the
certificate polynomials to produce unitaries.

\paragraph{Organization.}
\Cref{sec:background} recalls the ordered search problem and the
polynomial formulation.  \Cref{sec:zero-error} presents the
zero-error continuum algorithm.
\Cref{sec:green-mixture-ordered-search} presents the exact algorithm.

\bigskip

\noindent\fbox{%
  \begin{minipage}{\dimexpr\linewidth-2\fboxsep-2\fboxrule}
    \paragraph{AI Disclosure.}
    The core technical results in this paper were discovered autonomously by AI systems. All proofs have been verified and refined by the authors, who take sole responsibility for any mistakes. The manuscript was prepared with assistance from GPT-5.6-Sol, GPT-6-Astra, and Claude Fable~5, and edited by the authors.
  \end{minipage}%
}


\section{Background}
\label{sec:background}

\subsection{Ordered search}
\label{sec:ordered-search}

In the ordered search problem we are given a sorted list
$a_1\le\cdots\le a_n$ together with a target value $t$ that is promised to occur in the list. We are tasked with finding the first occurrence of $t$, using comparisons of the form ``$a_i\ge t$?''. Equivalently, we can take the input as $x=0^j1^{n-j}\in\bit^n$ for an unknown $j\in\{0,\ldots,n-1\}$, a query
at index $i$ returns $x_i=\mathbf 1[i\ge j]$, and the goal is to
output~$j$.  We refer to $j$ as the \emph{transition point} of the
input.  In \Cref{sec:zero-error} it will be convenient to center the
instance, relabeling indices so that the transition point $s$
satisfies $s\in[-n/2,n/2]$ and the query at position $i$ returns
$\mathbf 1[i\ge s]$; the two conventions differ only notationally.

\begin{problem}[Ordered search]
\label{prob:ordered-search}
For $n\ge1$, let
\[
    \mathsf{OSP}_n
      \coloneqq\set{0^j1^{n-j}:0\le j\le n-1}\subseteq\bit^n.
\]
Given query access to an input $x=0^j1^{n-j}\in\mathsf{OSP}_n$, where a
query at index $i\in\{0,\ldots,n-1\}$ returns
$x_i=\mathbf 1[i\ge j]$, output the transition point~$j$.
\end{problem}

In the quantum query model, access to $x$ is provided by the phase
oracle 
\[
    O_x\ket i=(-1)^{x_i}\ket i,
\]
and a $k$-query algorithm is a sequence of input-independent unitaries
interleaved with $k$ applications of $O_x$, followed by a measurement. There is no loss of generality in not providing controlled access, as $x_{n-1}=1$ for all inputs $x \in \mathsf{OSP}_n$.
An algorithm is \emph{exact} if it outputs $j$ with certainty on every
input, and \emph{zero-error} if it never outputs an incorrect value but has randomized runtime;
we then count the expected number of queries.  We write
$\Q_E(\textsc{OSP}_n)$ for the minimum number of queries of an
exact algorithm.  A candidate transition point can be verified with two
classical comparison queries, a fact we use to convert high-probability
algorithms into zero-error ones.

Following~\cite{FarhiGoldstoneGutmannSipser1999}, it is sometimes useful to pass
to a symmetrized variant.  Replace the input $x$ by the $2n$-bit string
$y=x\bar x$, where $\bar x$ is the bitwise complement of $x$.  The
resulting inputs, together with their bitwise complements, are exactly the $2n$ cyclic shifts of
$1^n0^n$; the problem asks for the shift modulo $n$; and the input
family is equivariant under the cyclic translation
$T\ket i=\ket{i+1\bmod 2n}$.  The symmetrized and original problems are
equivalent as query problems, with no overhead in either
direction~\cite{FarhiGoldstoneGutmannSipser1999,CarolanChildsKovacsDeakSchaeffer2025}.  A quantum algorithm for the
symmetrized problem is \emph{translation invariant} if it acts on the
single $2n$-dimensional index register with no workspace, its initial
state is the uniform superposition, and each of its non-query unitaries
commutes with $T$, i.e., is diagonal in the Fourier basis of
$\mathbb Z/2n$~\cite{FarhiGoldstoneGutmannSipser1999}.

Translation-invariant algorithms were introduced because they are
analytically tractable: exact algorithms in this class are
characterized by a feasibility program over nonnegative Laurent
polynomials, stated as \Cref{prop:ordered-search} below.  A priori this
characterization only yields upper bounds.  However, Carolan, Childs,
Kovacs-Deak, and Schaeffer recently proved that the restriction is
without loss of generality: any $k$-query quantum algorithm for
ordered search can be converted into a translation-invariant,
$k$-query algorithm  with the same success probability~\cite[Theorem~1]{CarolanChildsKovacsDeakSchaeffer2025}.  Consequently the polynomial
program computes $\Q_E$ exactly, and \Cref{prop:ordered-search} below
is stated in this stronger, unconditional form: the forward direction
(from a polynomial path to an algorithm) is due
to~\cite{FarhiGoldstoneGutmannSipser1999} via Fej\'er--Riesz
factorization~\cite{GrenanderSzego1958,ChildsLandahlParrilo2007}, and the converse is
\cite[Theorem~1]{CarolanChildsKovacsDeakSchaeffer2025}.  The same
equivalence implies that feasibility of the polynomial program is monotone in $n$: a $k$-query
exact algorithm for lists of length $n$ yields one for every shorter
list~\cite[Corollary~24]{CarolanChildsKovacsDeakSchaeffer2025}.

\subsection{The polynomial formulation}
\label{subsec:poly-form}

Write a normalized symmetric Laurent polynomial $q$ of degree at most $n-1$ as $q(z)=1+\sum_{j=1}^{n-1}a_j(z^j+z^{-j})$ where we take $|z|=1$. Let $J\colon\mathbb R^{n-1}\to\mathbb R^{n-1}$ reverse coordinates (i.e., $(Ja)_j=a_{n-j}$),
and let
\[
    P_+=\frac{I+J}{2},
    \qquad
    P_-=\frac{I-J}{2}.
\]
The initial polynomial is the normalized Fej\'er kernel $F_n(z)\coloneqq\frac1n\left|1+z+\cdots+z^{n-1}\right|^2$,
whose coefficient vector is $a^{(0)}_j=1-\frac{|j|}{n}$. We use the standard translation-invariant characterization: 

\begin{proposition}[\cite{FarhiGoldstoneGutmannSipser1999,CarolanChildsKovacsDeakSchaeffer2025}]
    \label{prop:ordered-search}
    There exists an exact $k$-query quantum algorithm for $n$-element ordered search if and only if there exist nonnegative symmetric Laurent polynomials $q_0,\ldots,q_k$ of degree less than $n$ satisfying
    \begin{equation}
        q_0=F_n,
        \qquad
        q_k=1, \qquad \int_{0}^{2\pi} q_t(e^{i\theta})d\theta = 2\pi \qquad (\forall t \in [k]),
        \label{eq:path-endpoints}
    \end{equation}
    and
    \begin{equation}
        J\bigl(a^{(t)}-a^{(t-1)}\bigr)
          =(-1)^{t+1}\bigl(a^{(t)}-a^{(t-1)}\bigr)
        \qquad (1\le t\le k).
        \label{eq:alternating-parity}
    \end{equation}
\end{proposition}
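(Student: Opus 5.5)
The plan is to treat the two directions separately. The forward direction (polynomial path $\Rightarrow$ algorithm) is the construction of Farhi et al. The converse is a reduction to translation-invariant algorithms followed by the same computation run backwards.

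\emph{Forward direction.} Work in the symmetrized problem on $\mathbb Z/2n$, with the translation-invariant algorithm starting from the uniform superposition. The state after $t$ queries on input $x^{(s)}$, the $s$-th cyclic shift of $1^n0^n$, is $\ket{\psi_t^{(s)}} = T^s\ket{\psi_t^{(0)}}$, because the oracles are translation-equivariant and the non-query unitaries commute with $T$. Hence the Gram matrix $\braket{\psi_t^{(s)}|\psi_t^{(s')}}$ is a circulant in $s-s'$. Encode it as a Laurent polynomial $q_t(z)=\sum_j \braket{\psi_t^{(0)}|\psi_t^{(j)}}z^j$. Positive semidefiniteness of a circulant Gram matrix is nonnegativity of $q_t$ on the $2n$-th roots of unity; one checks that degree $<n$ and the antipodal symmetry $x\mapsto\bar x$ upgrade this to nonnegativity on the whole circle. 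The normalization $\braket{\psi|\psi}=1$ gives $\int q_t = 2\pi$.

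\begin{itemize}
\item \emph{Endpoints.} For the initial uniform state, the overlap between shifts $0$ and $j$ is the fraction of coordinates where the two oracle strings agree, minus the fraction where they disagree. This yields the Fejér coefficients $1-|j|/n$ for $q_0$ (after the first query; with the paper's convention $q_0$ is the post-first-query Gram polynomial). Exactness means the final states for distinct shifts mod $n$ are orthogonal, i.e.\ $q_k=1$.
\item \emph{Parity constraint.} A query multiplies each basis component by $(-1)^{x_i}$. Hence $\braket{\psi^{(0)}|O_{0}O_{j}|\psi^{(j)}}$ differs from the previous overlap only through the rows where $x^{(0)}$ and $x^{(j)}$ disagree, a set of size $2j$ or $2(n-j)$. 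Decomposing along this set shows that the change in coefficients is symmetric or antisymmetric under $j\mapsto n-j$, with the sign alternating because $T^n$ complements the input.
\item \emph{Construction.} Conversely, given $q_t$, Fejér--Riesz factorization produces vectors with the prescribed circulant Gram matrices. The parity condition is exactly what allows a Fourier-diagonal unitary to map the $(t-1)$-th set of post-query vectors onto the $t$-th.
\end{itemize}

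\emph{Converse.} Invoke \cite[Theorem~1]{CarolanChildsKovacsDeakSchaeffer2025}: any $k$-query exact algorithm can be made translation invariant with the same success probability, and the symmetrized problem is query-equivalent to the original. Applying the Gram-matrix computation above to the resulting algorithm produces the $q_t$ satisfying \eqref{eq:path-endpoints} and \eqref{eq:alternating-parity}.

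The main obstacle is the forward construction step: showing that equality of the relevant parity component of the coefficients guarantees existence of a single translation-invariant unitary relating consecutive states. I would handle it Fourier mode by mode. For each frequency, the unitary is a phase, and the requirement reduces to equality of norms of the Fourier components. These norms are precisely the values of the symmetric, respectively antisymmetric, parts of $q_{t-1}$ and $q_t$ at the corresponding roots of unity. This is the point where the parity condition is used.
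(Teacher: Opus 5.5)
Your two-direction architecture matches the paper, which gives no argument beyond attributing the forward direction to Farhi et al.\ (via Fej\'er--Riesz) and the converse to the translation-invariance theorem of Carolan et al. The mechanism you sketch for the correspondence, however, rests on the wrong object. The polynomial $q_t$ is not the overlap polynomial $\sum_j\langle\psi_t^{(0)}|\psi_t^{(j)}\rangle z^j$.

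Because $T^n$ complements the input, the overlaps satisfy $c(j+n)=\pm c(j)$ on $\mathbb Z/2n$. As a result, the overlaps at any single time determine only one reversal component of the Farhi et al.\ polynomial. Concretely, after the first query the overlaps are (agreements minus disagreements)$/2n=1-2|j|/n$, not $1-|j|/n$. For $n=3$ their truncation $1+\frac23\cos\theta-\frac23\cos2\theta$ equals $-\frac13$ at $\theta=\pi$, which is itself a $6$th root of unity. So your claimed link between positive semidefiniteness and nonnegativity at the roots fails. So does the ``upgrade'' to the whole circle, since nonnegativity at finitely many points does not give this in general. Your parity bullet fails as well: across the first query the overlaps change by $-2j/n$, which is neither symmetric nor antisymmetric under $j\mapsto n-j$.

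The missing idea is that every intermediate state lies in a $T^n$-eigenspace. The uniform state has eigenvalue $+1$, Fourier-diagonal unitaries commute with $T^n$, and $O_0$ anticommutes with it. So the state entering query $t+1$ is $\phi=(v,\epsilon v)$ with $v\in\mathbb C^n$ and $\epsilon=(-1)^t$, and the query merely flips $\epsilon$. The right definition is $q_t(z)=2|A_t(z)|^2$ with $A_t(z)=\sum_{i<n}v_iz^i$. With it:
\begin{itemize}
\item Nonnegativity on the whole circle and degree $<n$ are automatic.
\item The uniform state gives $q_0=F_n$.
\item A direct computation (for real amplitudes) gives $\langle\phi|T^j\phi\rangle=a^{(t)}_j+\epsilon a^{(t)}_{n-j}=2(P_\epsilon a^{(t)})_j$ for $1\le j\le n-1$.
\end{itemize}
Thus the overlaps just before and just after query $t+1$ are the two reversal components of the single polynomial $q_t$; for $t=0$ these are $1$ and $1-2j/n$. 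This is exactly why consecutive polynomials share alternating components.

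With this $q_t$, your mode-by-mode argument is correct as stated. The twisted Fourier powers at $z^n=\epsilon$ are proportional to values of $q_t$ there, and those values depend only on $P_\epsilon a^{(t)}$. Fej\'er--Riesz then supplies $A_t$ in the forward direction. The converse likewise goes through once you apply this definition, not the Gram one, to the translation-invariant algorithm.
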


Equivalently, successive increments alternate between the two eigenspaces of
$J$.  Condition~\eqref{eq:alternating-parity} is also equivalent to agreement
of $q_t$ and $q_{t-1}$ on the roots satisfying $z^n=(-1)^t$.

\subsection{Transforms}

Claude's algorithm utilizes a number of transforms on $\mathcal H = L^2(\mathbb R)$, the Hilbert space of square-integrable complex functions on the real line.

\begin{definition}[Fourier transform]
\label{def:fourier-trans}
    The unitary Fourier transform of $\psi \in \mathcal H$ is denoted
\[
    \mathcal F\psi(\omega) = \hat \psi(\omega) \coloneqq \frac{1}{\sqrt{2\pi}}
        \int_{\mathbb R}e^{-i\omega x}\psi(x)\,dx.
\]
\end{definition} 
Translations are denoted by $(T_a\psi)(x)=\psi(x-a)$. A linear operator on $\mathcal H$ that commutes with $T_a$ for all $a$ is called \emph{translation invariant}, and is diagonalized by the Fourier transform.

Let $\mathcal H_+\subseteq \mathcal H$ be the subspace of even functions and $\mathcal H_- \subseteq \mathcal H$ the subspace of odd functions, such that $\mathcal H = \mathcal H_+ \oplus \mathcal H_-$ (an internal direct sum). The log-position transform is most easily defined by its inverse transform $\mathcal L^\dagger \colon \mathcal H \oplus \mathcal H \rightarrow \mathcal H$ (where the domain of $\mathcal L^\dagger$ is the external direct sum $\mathcal H \oplus \mathcal H \cong \mathcal H \otimes \C^2$).
This operator splits into $\mathcal L_+^\dagger \colon \mathcal H \rightarrow \mathcal H_+$ and $\mathcal L_-^\dagger \colon \mathcal H \rightarrow \mathcal H_-$, i.e., we can write $\mathcal L^\dagger(h_+, h_-) = \mathcal L_+^\dagger(h_+) + \mathcal L_-^\dagger(h_-)$ according to the following.

\begin{definition}[Inverse log-position transform]
    The inverse log-position transform $\mathcal L^\dagger$ acting on $h_+, h_- \in \mathcal H$ is defined in terms of $\mathcal L_+^\dagger$ and $\mathcal L_-^\dagger$ as above, where \[
    \mathcal L_+^\dagger h(x)
      \coloneqq\frac{1}{\sqrt 2}|x|^{-1/2}h(\ln|x|), \quad \mathcal L_-^\dagger h(x) \coloneqq\frac{\operatorname{sign}(x)}{\sqrt 2}|x|^{-1/2}h(\ln|x|),\]
     for all $x \neq 0$.
     \label{def:log-trans}
\end{definition} One can verify that $\mathcal L_+^\dagger$ and $\mathcal L_-^\dagger$ are unitary, which implies that $\mathcal L^\dagger$ is also unitary. We can now define the Mellin transform as the composition of the log-position transform and Fourier transform.

\begin{definition}[Mellin transform]
    The Mellin transform $\mathcal M \colon \mathcal H \rightarrow \mathcal H \oplus \mathcal H$ is the composition of the log-position transform and the Fourier transform, $\mathcal M = (\mathcal F \oplus \mathcal F) \cdot \mathcal L$.
    \label{def:mellin-trans}
\end{definition}
We denote dilations by $(D_a \psi)(x) = \sqrt{a} \psi(a\cdot x)$. A bounded linear operator on $\mathcal H$ that commutes with $D_a$ for all $a>0$ is called \emph{dilation invariant}. Under the Mellin transform, such an operator acts by multiplication by a $2 \times 2$ matrix-valued function. If it also preserves the even and odd subspaces, this matrix is diagonal.

A key class of states will be the set of even states (i.e., $\psi$ with $\mathcal L_- \psi =0$) that are Gaussian in log-position (i.e., $\mathcal L_+ \psi$ is a Gaussian). These are defined as follows.

\begin{definition}
\label{def:log-gaussian}
    For a center $u\in\mathbb R$ and width $w>0$, define the $L^2$-normalized Gaussian
\begin{equation}
    h_{u,w}(y)
      \coloneqq\frac{1}{(2\pi w^2)^{1/4}}
        \exp\left(-\frac{(y-u)^2}{4w^2}\right)
    \label{eq:log-gaussian}
\end{equation}
and the corresponding log-Gaussian even packet
\begin{equation}
    \psi_{u,w}\coloneqq\mathcal L_+^\dagger h_{u,w}.
    \label{eq:physical-log-gaussian}
\end{equation}
\end{definition}

\begin{figure}
    \centering
\includegraphics[width=0.8\linewidth]{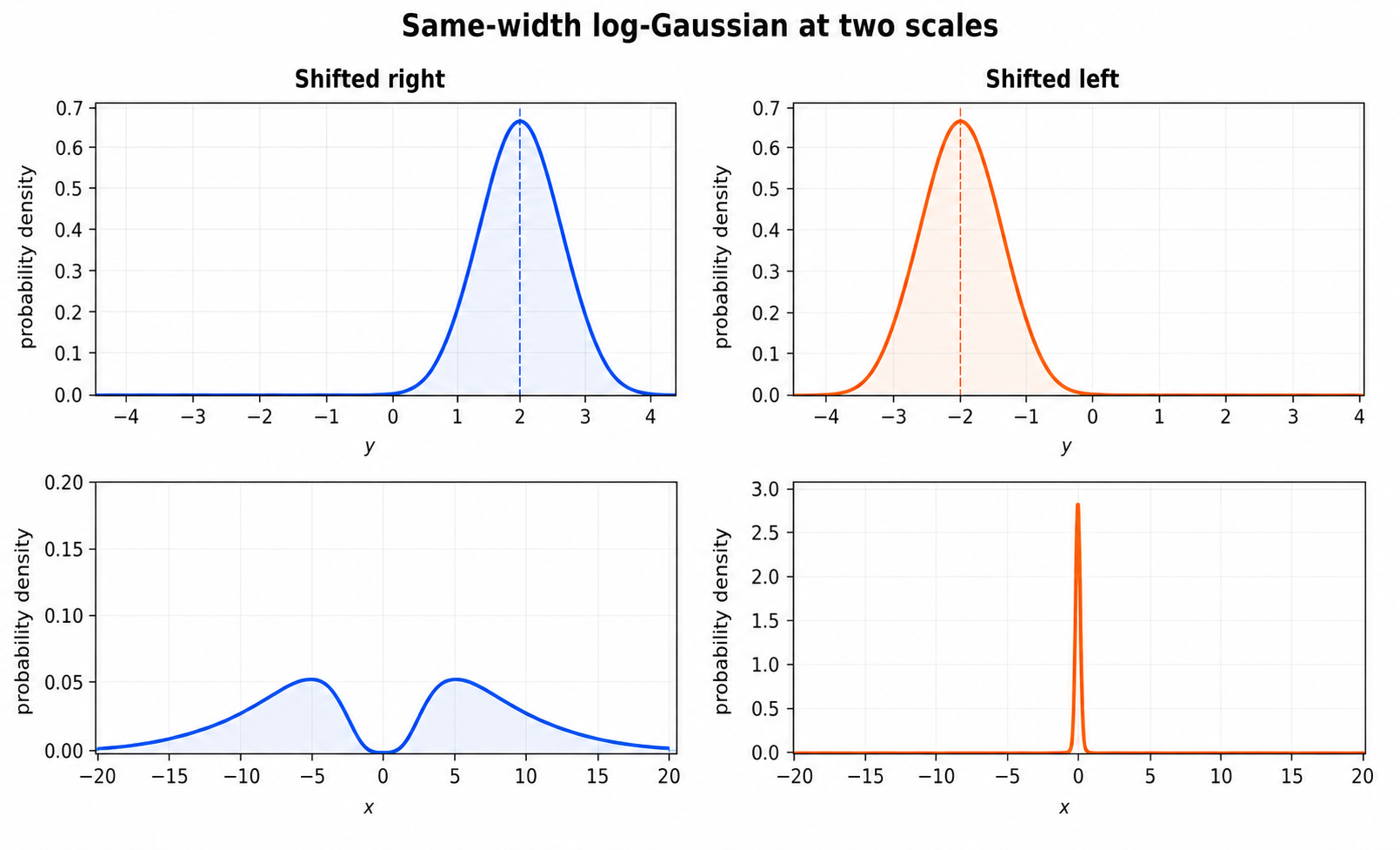}
    \caption{Log-position Gaussian packets, where $x$ is position and $y$ is log-position. Note that shifting the packet left in log-position corresponds to concentrating the packet in position.}
    \label{fig:log-packets}
\end{figure}

As depicted in \Cref{fig:log-packets}, translations in log-position space correspond to dilations in position space. As the Gaussian wave packet translates from right to left in log-position space, the position-space wave function contracts sharply around the origin. Taking the position-space origin as the oracle transition point, Claude's algorithm will begin with a log-Gaussian centered near $\ln n$ in log-position space, i.e., a wave function that is widely spread in position space. The algorithm will drive this Gaussian leftwards in log-position a distance $\pi$ per query, so the packet will concentrate on the transition point after approximately $\frac{1}{\pi} \ln n$ queries.

\section{Zero-error solution}
\label{sec:zero-error}
In this section, we describe and analyze a zero-error quantum algorithm for ordered search that uses $\frac{1}{\pi} \ln n + o(\log n)$ queries in expectation.
The algorithm maps the discrete ordered search problem described in~\Cref{prob:ordered-search} to a continuum variant. Take the algorithm's workspace to be the Hilbert space $\mathcal H=L^2(\mathbb R)$. If we center the ordered-search instance so that the unknown transition point satisfies $s\in[-n/2,n/2]$, then the phase oracle $(O_s\psi)(x)=(-1)^{\mathbf 1[x\ge s]}\psi(x)$ can be implemented straightforwardly with one query to the discrete oracle. The algorithm is:
\begin{enumerate}
    \item Prepare an initial state that is Gaussian in log-position, $\psi_{\mathrm{in}}=\psi_{u,w}$, with center $u=\ln n + \ln^{3/4} n$ and width $w = \ln^{1/2} n$.
    \item Alternately apply the oracle $O_s$ in the standard basis followed by the non-oracular operation $O_0$ in the Fourier basis, for $Q=
      \left\lceil
        \frac{\ln n+2\ln^{3/4} n}{\pi}
      \right\rceil$ iterations.
\end{enumerate}

The specific choice of input state is not particularly important, and there are likely other states that would work. The key feature of $\psi_{\mathrm{in}}$ is that it is sharply concentrated around the origin in the Mellin basis, which follows from a large spread in log-position space. This is consequence of the fact that the Fourier transform of a Gaussian with width $w$ is a Gaussian with width $O(1/w)$. For an appropriately chosen $\psi_{\mathrm{in}}$, measuring the final state in the position basis and rounding to the nearest integer will give $s$ with $1-o(1)$ probability. To obtain a zero-error algorithm, one can simply repeat the procedure until $s$ is identified.

Note that the operator in step two can be applied by a single query, as $O_0$ is a fixed oracle-independent operator. We sometimes write the operator in step two as $U_s=\mathcal F^\dagger O_0\mathcal F O_s$, and refer to it as the \emph{driving} operator.

\begin{remark}
    \label{rem:trans-scale-invar}
    The choice of $\mathcal F^\dagger O_0 \mathcal F$ in the driving operator is both translation- and scale-invariant. The only other such operators are the identity, and linear combinations of the identity and $\mathcal F^\dagger O_0 \mathcal F$.
\end{remark}

In the continuum picture instances are translates of each one another, where instances of size $n$ are within translations by $O(n)$. Therefore, translation equivariance of the driving operation and approximate translation invariance of the initial state mean that an algorithm that works for one marked element will work for all marked elements. Scale invariance, on the other hand, is the phenomenon leading to exponential growth in the progress of the algorithm, and therefore a logarithmic number of queries. The uncertainty of the algorithm decreases by a constant factor approaching $e^{-\pi}$ with each query, which leads to scale-invariant behavior. These facts make the choice of driving operation extremely natural.

\subsection{Initial state}
\label{subsec:params-choice}
To construct the initial state, we write
\begin{equation}
    L\coloneqq\ln n,
    \qquad
    w\coloneqq L^{1/2},
    \qquad
    r\coloneqq L^{3/4},
    \qquad
    u_{\mathrm{in}}\coloneqq L+r,
    \label{eq:continuum-parameters}
\end{equation}
where $u_{\mathrm{in}}$ parametrizes the initial state as $\psi_{\mathrm{in}}\coloneqq\psi_{u_{\mathrm{in}},w}$.
Thus the initial packet has log-position width $o(\ln n)$, is centered at
$\ln n+o(\ln n)$, and has Mellin-frequency width $O((\ln n)^{-1/2})$. We will show that this packet is nearly invariant under any translation of magnitude less than $n/2$. Combined with the translation equivariance of the algorithm, $U_s = T_s U_0 T_s^\dagger$, this implies that the success probability with transition point at $0$ is within $o(1)$ of the success probability for any other transition point.

\begin{lemma}[Uniform approximate translation invariance of the initial packet]
\label{lem:initial-packet-translation-invariance}
Uniformly for $|s|\le n/2$,
\[
    \left\|
        T_s\psi_{\mathrm{in}}-\psi_{\mathrm{in}}
    \right\|_2
    =o(1).
\]
\end{lemma}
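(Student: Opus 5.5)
We need to show $\|T_s\psi-\psi\|_2=o(1)$ uniformly for $|s|\le n/2$, where $\psi=\psi_{\mathrm{in}}=\mathcal L_+^\dagger h_{u_{\mathrm{in}},w}$. In position space,
\[
\psi(x)=\tfrac{1}{\sqrt2}|x|^{-1/2}h(\ln|x|),
\]
an even function whose mass sits at $|x|\approx e^{u_{\mathrm{in}}}=n e^{L^{3/4}}$, with log-spread $w=L^{1/2}$. Since $|s|\le n/2$ is tiny compared with this scale, the plan is to split the line into a region near the origin and a region far from it. On the far region, smoothness of $\psi$ at scale $|x|$ makes a shift by $s$ cost little. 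The near region carries negligible mass.

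Fix a cutoff $R:=n e^{L^{3/4}/2}$, so that $\ln R=L+r/2=u_{\mathrm{in}}-r/2$, and bound
\[
\|T_s\psi-\psi\|_2\le \|(T_s\psi-\psi)\mathbf 1_{|x|\le 2R}\|_2+\|(T_s\psi-\psi)\mathbf 1_{|x|>2R}\|_2 .
\]

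\textbf{Near region.} Both $\psi$ and $T_s\psi$ restricted to $|x|\le 2R$ have norm at most $\|\psi\mathbf 1_{|x|\le 3R}\|_2$, because $|s|\le n/2\le R$. By unitarity of $\mathcal L_+^\dagger$, this mass equals $\int_{y\le \ln 3R}|h(y)|^2\,dy$. That is a Gaussian tail at distance $(u_{\mathrm{in}}-\ln 3R)/w\approx (r/2)/w=\tfrac12 L^{1/4}\to\infty$ standard deviations, so it is $o(1)$.

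\textbf{Far region.} Here I would write the difference as an integral of the derivative, $T_s\psi(x)-\psi(x)=-\int_0^s\psi'(x-t)\,dt$. Minkowski's inequality then gives a bound of $|s|\,\|\psi'\mathbf 1_{|x|>R}\|_2$, since $|x-t|\ge R$ throughout. Differentiating for $x>0$,
\[
\psi'(x)=\tfrac1{\sqrt2}x^{-3/2}\bigl(-\tfrac12 h(\ln x)+h'(\ln x)\bigr).
\]
Substituting $y=\ln x$ gives
\[
\|\psi'\mathbf 1_{|x|>R}\|_2^2=\int_{\ln R}^\infty e^{-2y}\,\bigl|{-\tfrac12}h+h'\bigr|^2\,dy\le R^{-2}\,O(1+w^{-2})=O(R^{-2}),
\]
using $\|h'\|_2=O(1/w)$. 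Hence the far-region contribution is $O(|s|/R)=O(e^{-r/2})=o(1)$ uniformly in $|s|\le n/2$.

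The only real care is the bookkeeping. The cutoff must sit below the packet center by a margin that is $\omega(w)$, so the tail mass is $o(1)$. At the same time it must sit above $\ln n$ by $\omega(1)$, so the ratio $s/R$ vanishes. The choice $r=L^{3/4}$, with $w=L^{1/2}$, leaves room for both. I expect no genuine obstacle beyond checking that the derivative bound holds near the cutoff, which follows from the explicit formula.
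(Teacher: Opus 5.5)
Your proof is correct and follows the paper's argument essentially step for step. It uses the same cutoff $R=ne^{r/2}$, a Gaussian tail bound in log-position for the mass near the origin, and the derivative estimate $\|\psi_{\mathrm{in}}'\|_{L^2(|x|\ge R)}=O(R^{-1})$, which gives the $O(|s|/R)=O(e^{-r/2})$ contribution; your $2R$/$3R$ bookkeeping with Minkowski's inequality is in fact a slightly more careful version of the paper's step ``up to $o(1)$ in norm, $\psi_{\mathrm{in}}$ is supported on $|x|\ge R$.''
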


\begin{proof}
Recall that $\psi_{\mathrm{in}}=\psi_{u_{\mathrm{in}},w}$ 
has log-position mean $u_{\mathrm{in}}$ and standard deviation $w$.
Since $u_{\mathrm{in}}=\ln n+r$,
this scale is $e^{u_{\mathrm{in}}}=ne^r = \omega(n)$.
Thus a translation by $|s|\le n/2$ is negligible compared with the scale
on which the packet varies. Explicitly, if $X$ denotes position measured in
$\psi_{\mathrm{in}}$, then $Y\coloneqq\ln|X|$ is Gaussian with mean $u_{\mathrm{in}}$ and variance $w^2$. Hence, taking $R\coloneqq e^{u_{\mathrm{in}}-r/2}=ne^{r/2}$, we have
\[
    \Pr[|X|\le R]
    =
    \Pr\!\left[Y-u_{\mathrm{in}}\le-\frac r2\right]
    =o(1).
\]
Thus, up to $o(1)$ in norm, $\psi_{\mathrm{in}}$ is supported on
$|x|\ge R$.

On this region, differentiating the Gaussian packet shows that
\[
    |\psi_{\mathrm{in}}'(x)|
    =
    O\!\left(\frac{1}{|x|}\right)
    \left(1+\frac{|\ln|x|-u_{\mathrm{in}}|}{w^2}\right)
    |\psi_{\mathrm{in}}(x)|,
\]
and therefore
\[
    \|\psi_{\mathrm{in}}'\|_{L^2(|x|\ge R)}
    =O(R^{-1}).
\]
Consequently, for $|s|\le n/2$,
\[
    \|T_s\psi_{\mathrm{in}}-\psi_{\mathrm{in}}\|_2
    \le o(1)+O\!\left(\frac{|s|}{R}\right)
    \le o(1)+O(e^{-r/2})
    =o(1).
\]
This is uniform in $|s|\le n/2$.
\end{proof}

The total number of queries is $Q\coloneqq\left\lceil\frac{L+2r}{\pi}\right\rceil$. In the following sections we show that a query corresponds to a negative translation by $\pi$, motivating the notation $u_{\mathrm{out}}\coloneqq u_{\mathrm{in}}-\pi Q$.
$Q$ is chosen such that $-r-\pi<u_{\mathrm{out}}\le-r$.

\subsection{Algorithm analysis}

In this section, we establish that the query operator $U_0$ is close to a negative translation by $\pi$ in log-position space on any even state concentrated near Mellin frequency $\omega=0$. This suffices to understand the behavior for any $U_s$ due to  translation equivariance, as described in \Cref{subsec:params-choice}. The approximation is such that $t$ applications of $U_0$ are well approximated by $t$ such dilations, for any $t = o(w^3)$. The global phase accumulation $e^{-i\pi t/2}$ is irrelevant.

\begin{lemma}[Translation of an even Mellin packet]
\label{lem:mellin-packet-translation}
For the log-Gaussian packet $\psi_{u, w}$,
\begin{equation}
    \left\|
        U_0^t\psi_{u,w}
        -e^{-i\pi t/2}\psi_{u-\pi t,w}
    \right\|_2
    =O\left(\frac{t}{w^3}\right),
    \label{eq:packet-propagation}
\end{equation}
uniformly in the center $u$. Thus, whenever $t=o(w^3)$, the packet is
translated to the left by $\pi t$ in log-position up to vanishing error.
\end{lemma}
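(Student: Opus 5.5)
The plan is to diagonalize $U_0$ on the even subspace with the Mellin transform, show that its multiplier is a pure phase which equals $e^{-i\pi/2}e^{i\pi\omega}$ up to an $O(|\omega|^3)$ correction in the phase, and then integrate that error against the Mellin profile of $\psi_{u,w}$, which has width $O(1/w)$.

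\textbf{Step 1: structure of $U_0$.} Write $U_0=(\mathcal F^\dagger O_0\mathcal F)\,O_0$. Here $O_0$ is multiplication by $-\operatorname{sign}(x)$ (for $x\neq0$), so it maps $\mathcal H_+$ to $\mathcal H_-$ and back. The Fourier transform preserves parity, so $\mathcal F^\dagger O_0\mathcal F=-\mathcal F^\dagger\operatorname{sign}\mathcal F$ also swaps parity. This operator is $\pm i$ times the Hilbert transform $\mathcal{H}$. Both factors commute with every dilation $D_a$, so $U_0$ maps $\mathcal H_+$ to itself and commutes with dilations. By the remark after \Cref{def:mellin-trans}, $\mathcal M$ turns $U_0|_{\mathcal H_+}$ into multiplication by a scalar function $m(\omega)$. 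Since $U_0$ is unitary, $|m(\omega)|=1$.

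\textbf{Step 2: compute $m(\omega)$.} The factor $\operatorname{sign}(x)$ acts trivially in log coordinates: it sends $\mathcal L_+^\dagger h$ to $\mathcal L_-^\dagger h$. It therefore contributes only a constant sign. The whole $\omega$-dependence comes from the Hilbert transform acting on the odd Mellin eigenfunction $\operatorname{sign}(x)|x|^{-1/2+i\omega}$. I will use the classical identity
\[
\mathcal{H}\bigl[\operatorname{sign}(x)|x|^{-a}\bigr]=c(a)\,|x|^{-a},\qquad c(a)=\pm\tan(\pi a/2),
\]
with $a=\tfrac12-i\omega$ (this could also be checked by a contour integral). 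This gives
\[
\tan\!\Bigl(\frac\pi4-\frac{i\pi\omega}2\Bigr)=\frac{1-i\tanh(\pi\omega/2)}{1+i\tanh(\pi\omega/2)}=e^{-2i\arctan\tanh(\pi\omega/2)}.
\]
Taking all signs together, I expect
\[
m(\omega)=e^{-i\pi/2}\,e^{i\phi(\omega)},\qquad \phi(\omega)=2\arctan\tanh\!\Bigl(\frac{\pi\omega}2\Bigr).
\]
The sign of $\phi$ has to be fixed so that multiplication by $e^{i\pi\omega}$ corresponds, under the Fourier convention of \Cref{def:fourier-trans}, to translation by $-\pi$ in log-position. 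Indeed $\mathcal F(T_a h)(\omega)=e^{-i\omega a}\hat h(\omega)$, so $a=-\pi$ matches $e^{i\pi\omega}$.

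The function $\phi$ is odd, real-analytic, and bounded, with $\phi(\omega)=\pi\omega+O(\omega^3)$. The expansion follows from $\arctan\tanh x=x-\tfrac23x^3+\cdots$. Since $|\phi(\omega)-\pi\omega|\lesssim|\omega|$ for large $|\omega|$, we get the global bound $|\phi(\omega)-\pi\omega|\le C\min(|\omega|^3,|\omega|)\le C|\omega|^3$ for all $\omega$.

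\textbf{Step 3: propagate $t$ steps.} Unitarity of $\mathcal M$ gives
\[
\bigl\|U_0^t\psi_{u,w}-e^{-i\pi t/2}\psi_{u-\pi t,w}\bigr\|_2^2=\int\bigl|e^{it\phi(\omega)}-e^{it\pi\omega}\bigr|^2\,|\hat h_{u,w}(\omega)|^2\,d\omega .
\]
Using $|e^{ia}-e^{ib}|\le|a-b|$, the integrand is at most $C^2t^2|\omega|^6|\hat h_{u,w}|^2$. Now $|\hat h_{u,w}|^2$ is a centered Gaussian of variance $1/(4w^2)$ whose modulus does not depend on $u$, which is where the uniformity in $u$ comes from. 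Its sixth moment is $O(w^{-6})$, so the norm is $O(t/w^3)$, as claimed.

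\textbf{Main obstacle.} The main difficulty is Step 2: getting the exact Mellin multiplier of the Hilbert transform with every sign and phase convention consistent. That means the sign convention for $O_0$, the $i$ in $\mathcal F^\dagger\operatorname{sign}\mathcal F$, and the direction of the log-position shift. A wrong sign would turn the leftward shift into a rightward one. I would first verify the multiplier numerically or by a direct residue computation on $|x|^{-1/2+i\omega}$, and only then fix the constant $e^{-i\pi/2}$.
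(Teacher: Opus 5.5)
Your architecture matches the paper's proof: diagonalize $U_0$ on $\mathcal H_+$ with the Mellin transform, expand the multiplier's phase to third order, pass to $t$ steps via $|e^{ia}-e^{ib}|\le|a-b|$ and Plancherel, and finish with the sixth moment $15/(64w^6)$ of $|\widehat h_{u,w}|^2$, whose independence from $u$ gives uniformity. Steps~1 and~3 and your global bound $|\phi(\omega)-\pi\omega|\le C|\omega|^3$ are fine.

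\textbf{The gap is in Step~2.} This is the only step with real content, because the sign of $\phi$ decides whether the packet moves left (concentrates) or right (spreads). Your proposal does not derive that sign. Write $\mathbf H f(y)=\frac1\pi\,\mathrm{p.v.}\!\int\frac{f(x)}{y-x}\,dx$ for the Hilbert transform, to avoid the clash with $\mathcal H=L^2(\mathbb R)$. The identity you quote has $\tan$ and $\cot$ swapped. For $0<\operatorname{Re}a<1$,
\[
\mathbf H[|x|^{-a}]=\tan(\pi a/2)\operatorname{sign}(x)|x|^{-a},
\qquad
\mathbf H[\operatorname{sign}(x)|x|^{-a}]=-\cot(\pi a/2)|x|^{-a}.
\]
Both follow from $\mathrm{p.v.}\int_0^\infty t^{s-1}(1-t)^{-1}dt=\pi\cot\pi s$ and $\int_0^\infty t^{s-1}(1+t)^{-1}dt=\pi/\sin\pi s$.

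On the line $a=\tfrac12-i\omega$ the two factors are reciprocals, hence complex conjugates. So your own computation $\tan(\tfrac\pi4-\tfrac{i\pi\omega}2)=e^{-i\phi(\omega)}$ gives a multiplier proportional to $e^{-i\pi\omega}$. That is a shift to the \emph{right}; up to a constant it is the multiplier of $U_0^\dagger$. You then replace $e^{-i\phi}$ by $e^{i\phi}$ because the sign ``has to be fixed'' to give translation by $-\pi$. That chooses the sign to match the conclusion rather than deriving it. The unresolved $\pm$ in $c(a)$ and the ``$\pm i$'' relating $\mathcal F^\dagger\operatorname{sign}\mathcal F$ to $\mathbf H$ also leave the constant $e^{-i\pi t/2}$ in the statement undetermined.

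\textbf{The repair is short.} With the paper's Fourier convention, $\mathbf H=\mathcal F^\dagger(-i\operatorname{sign})\mathcal F$, so $\mathcal F^\dagger\operatorname{sign}\mathcal F=i\mathbf H$. Since $O_0=-\operatorname{sign}$ appears twice, $U_0=\mathcal F^\dagger S\mathcal F S$, and therefore
\[
U_0|x|^{-a}=i\mathbf H[\operatorname{sign}(x)|x|^{-a}]=-i\cot(\pi a/2)|x|^{-a}.
\]
At $a=\tfrac12-i\omega$ we get $\cot(\tfrac\pi4-\tfrac{i\pi\omega}2)=\tan(\tfrac\pi4+\tfrac{i\pi\omega}2)=e^{i\phi(\omega)}$. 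Hence $m(\omega)=e^{-i\pi/2}e^{i\phi(\omega)}$ with nothing chosen by hand.

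The paper reaches the same multiplier without the Hilbert transform. It Fourier-transforms the modes $\phi^\pm_\omega$ directly using the Gamma-function cosine and sine integrals, $\mathcal F\phi^\pm_\omega=a_\pm(\omega)\phi^\pm_{-\omega}$. It then reads off $\lambda(\omega)=a_-(\omega)/a_+(\omega)=\tanh(\pi\omega)-i\operatorname{sech}(\pi\omega)=e^{-i\theta(\omega)}$ with $\theta(\omega)=2\arctan(e^{-\pi\omega})$. This equals your $e^{-i\pi/2}e^{i\phi(\omega)}$, since $2\arctan(e^{-x})=\tfrac\pi2-2\arctan\tanh(x/2)$. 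With the multiplier established this way, the rest of your argument goes through as in the paper.
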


\begin{proof}

We actually establish the following more general fact: there is a universal constant $C>0$ such that, for every integer $t\ge 0$
and every $h\in L^2(\mathbb R)$ satisfying
$|\omega|^3\widehat h(\omega)\in L^2(\mathbb R)$,
\begin{equation}
    \left\|
        U_0^t\mathcal L_+^\dagger h
        -e^{-i\pi t/2}\mathcal L_+^\dagger T_{-\pi t}h
    \right\|_2
    \le
    Ct\left\||\omega|^3\widehat h(\omega)\right\|_2.
    \label{eq:general-packet-propagation}
\end{equation}

On the even subspace, the Mellin transform diagonalizes $U_0$. We will show that the eigenvalue $\lambda(\omega)=e^{-i\theta(\omega)}$ of mode $\omega$ is
\[
    \lambda(\omega) = e^{-i\theta(\omega)} = \tanh(\pi\omega)-i\operatorname{sech}(\pi\omega),
    \qquad
    \theta(\omega)\coloneqq 2\arctan(e^{-\pi\omega}).
\]
Indeed, for the generalized even and odd Mellin modes
\[
    \phi_\omega^+(x)
      \coloneqq\frac{1}{\sqrt2}|x|^{-1/2+i\omega},
    \qquad
    \phi_\omega^-(x)
      \coloneqq\frac{\operatorname{sgn}(x)}{\sqrt2}
        |x|^{-1/2+i\omega},
\]
the identities
\begin{align*}
    \int_0^\infty x^{z-1}\cos(\xi x)\,dx
      &=\Gamma(z)\cos\left(\frac{\pi z}{2}\right)\xi^{-z},\\
    \int_0^\infty x^{z-1}\sin(\xi x)\,dx
      &=\Gamma(z)\sin\left(\frac{\pi z}{2}\right)\xi^{-z}
\end{align*}
give, upon setting $z=\frac12+i\omega$,
\begin{align*}
    \mathcal F\phi_\omega^+
      &=a_+(\omega)\phi_{-\omega}^+,
    &
    a_+(\omega)
      &=\sqrt{\frac{2}{\pi}}\,
        \Gamma\left(\frac12+i\omega\right)
        \cos\left(\frac{\pi}{4}+\frac{i\pi\omega}{2}\right),\\
    \mathcal F\phi_\omega^-
      &=a_-(\omega)\phi_{-\omega}^-,
    &
    a_-(\omega)
      &=-i\sqrt{\frac{2}{\pi}}\,
        \Gamma\left(\frac12+i\omega\right)
        \sin\left(\frac{\pi}{4}+\frac{i\pi\omega}{2}\right).
\end{align*}
Letting $S$ denote the sign multiplier, since $U_0=\mathcal F^\dagger S\mathcal F S$ and
$S\phi_\omega^\pm=\phi_\omega^\mp$, it follows that
\[
    U_0\phi_\omega^+
      =\frac{a_-(\omega)}{a_+(\omega)}\phi_\omega^+
      =\left(
          \tanh(\pi\omega)-i\operatorname{sech}(\pi\omega)
        \right)\phi_\omega^+.
\]
Here the Gamma factors cancel, and we used $-i\tan\left(\frac{\pi}{4}+iz\right)=\tanh(2z)-i\operatorname{sech}(2z)$.

Near $\omega=0$, we have $\theta(\omega)=\frac{\pi}{2}-\pi\omega+\frac{\pi^3}{6}\omega^3+O(\omega^5)$.
On the other hand, $\widehat{T_{-\pi t}h}(\omega)=e^{i\pi t\omega}\widehat h(\omega)$, so $e^{-i\pi t/2}T_{-\pi t}$ has Mellin multiplier $e^{-it(\pi/2-\pi\omega)}$.
Put $R(\omega) \coloneqq\theta(\omega)-\left(\frac{\pi}{2}-\pi\omega\right).$
The Taylor expansion gives $R(\omega)=O(\omega^3)$ near the origin. Since
$0<\theta(\omega)<\pi$, after increasing the constant this becomes the
global bound $|R(\omega)|\le C|\omega|^3$.
Using $|e^{i\alpha}-e^{i\beta}|\le|\alpha-\beta|$ and Plancherel's theorem,
we obtain
\begin{align*}
    \left\|
        U_0^t\mathcal L_+^\dagger h-e^{-i\pi t/2}\mathcal L_+^\dagger T_{-\pi t}h
    \right\|_2
    &\le t\|R(\omega)\widehat h(\omega)\|_2\\
    &\le Ct\||\omega|^3\widehat h(\omega)\|_2.
\end{align*}

Finally, $|\widehat h_{u,w}(\omega)|^2$ is a centered Gaussian density of
variance $1/(4w^2)$, and hence
\[
    \int_{\mathbb R}\omega^6
      |\widehat h_{u,w}(\omega)|^2\,d\omega
      =\frac{15}{64w^6}.
\]
Therefore
\[
    \||\omega|^3\widehat h_{u,w}\|_2
      =\frac{\sqrt{15}}{8w^3}.
\]
Since $T_{-\pi t}h_{u,w}=h_{u-\pi t,w}$, the Gaussian specialization
follows.
\end{proof}

Using \Cref{lem:initial-packet-translation-invariance} and \Cref{lem:mellin-packet-translation}, we can now establish correctness of the algorithm.

\begin{lemma}[Uniform localization after propagation]
\label{lem:uniform-final-localization}
Let $X_s$ be the outcome obtained by measuring
$U_s^Q\psi_{\mathrm{in}}$ in the position basis. Uniformly for $|s|\le n/2$,
\begin{equation}
    \Pr\left[|X_s-s|<\frac12\right]
    =
    1-O\left((\ln n)^{-1/2}\right).
    \label{eq:uniform-recovery-probability}
\end{equation}
Thus rounding $X_s$ to the nearest possible transition point recovers $s$
with probability $1-o(1)$.
\end{lemma}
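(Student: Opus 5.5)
We reduce the statement for general $s$ to the case $s=0$, and then handle $s=0$ by propagating the packet with \Cref{lem:mellin-packet-translation}. Since $U_s=T_sU_0T_s^\dagger$, we have $U_s^Q\psi_{\mathrm{in}}=T_sU_0^QT_s^\dagger\psi_{\mathrm{in}}$. By \Cref{lem:initial-packet-translation-invariance}, $\|T_s^\dagger\psi_{\mathrm{in}}-\psi_{\mathrm{in}}\|_2=o(1)$ uniformly in $|s|\le n/2$. We will want the quantitative rate here, which the proof of that lemma gives as a Gaussian tail $\Pr[Y-u_{\mathrm{in}}\le -r/2]$ with $r/(2w)=L^{1/4}/2$, plus $O(e^{-r/2})$; both are far smaller than $L^{-1/2}$. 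By unitarity, the state $U_s^Q\psi_{\mathrm{in}}$ is within that distance of $T_sU_0^Q\psi_{\mathrm{in}}$. Measuring $T_s\phi$ in position gives the distribution of $X_0+s$. Since the total variation distance between two measurement distributions is at most twice the norm distance of the states, it suffices to show $\Pr[|X_0|<\tfrac12]=1-O(L^{-1/2})$ for the state $U_0^Q\psi_{\mathrm{in}}$.

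For $s=0$, apply \Cref{lem:mellin-packet-translation} with $t=Q$ and $w=L^{1/2}$. Since $Q=O(L)$ and $w^3=L^{3/2}$, the error is $O(Q/w^3)=O(L^{-1/2})$. This term is the dominant one and sets the rate in \eqref{eq:uniform-recovery-probability}. Thus $U_0^Q\psi_{\mathrm{in}}$ is within $O(L^{-1/2})$ in norm of $e^{-i\pi Q/2}\psi_{u_{\mathrm{out}},w}$, and a further loss of at most twice that in probability is acceptable.

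It remains to bound the mass of $\psi_{u_{\mathrm{out}},w}$ outside $|x|<\tfrac12$. As in the proof of \Cref{lem:initial-packet-translation-invariance}, measuring $\psi_{u,w}$ in position gives $Y=\ln|X|$ distributed as Gaussian with mean $u$ and variance $w^2$. So
\[
\Pr[|X|\ge \tfrac12]=\Pr[Y\ge -\ln 2]\le \Pr\!\left[Y-u_{\mathrm{out}}\ge r-\ln 2\right],
\]
using $u_{\mathrm{out}}\le -r$. This is a Gaussian tail at $(r-\ln2)/w\approx L^{1/4}$ standard deviations, hence $\exp(-\Omega(L^{1/2}))=o(L^{-1/2})$. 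Combining the three estimates (initial translation error, propagation error, final tail) gives the uniform bound $1-O(L^{-1/2})$.

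The main obstacle is the bookkeeping in the reduction to $s=0$: we need the equivariance identity and the conversion from norm distance to a probability bound on a measurement event. The $O(L^{-1/2})$ rate claimed also requires the translation-invariance error to be quantitatively $O(L^{-1/2})$, not merely $o(1)$. To get this, I would re-read the proof of \Cref{lem:initial-packet-translation-invariance}. There the tail term is actually an amplitude (square root of a probability) bounded by $\exp(-\Omega(L^{1/2}))$, and the derivative term by $O(e^{-r/2})$, both negligible. The propagation lemma is the only source of polynomial error.
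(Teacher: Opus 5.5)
Your proposal is correct and follows the paper's proof essentially step for step. You use equivariance $U_s^Q=T_sU_0^QT_s^\dagger$ together with the quantitative initial translation bound (the paper records it as $O(e^{-\sqrt{L}/16})$, matching your square-root-of-tail observation), then \Cref{lem:mellin-packet-translation} with $t=Q$ for the dominant $O(L^{-1/2})$ term, then the log-normal tail $e^{-\Omega(\sqrt{L})}$ via $u_{\mathrm{out}}\le -r$, and finally the norm-to-probability conversion; the only cosmetic difference is that you translate back to $s=0$ before measuring, whereas the paper first combines the norm bounds into a single estimate against $e^{-i\pi Q/2}T_s\psi_{u_{\mathrm{out}},w}$.
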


\begin{proof}
Observe that $U_s^Q=T_sU_0^QT_s^\dagger$ by the translation equivariance of the $U_s$. The estimate in the proof of \cref{lem:initial-packet-translation-invariance} and unitarity therefore imply
\begin{align}
    \left\|
        U_s^Q\psi_{\mathrm{in}}
        -T_sU_0^Q\psi_{\mathrm{in}}
    \right\|_2
    &=
    \left\|
        U_0^QT_s^\dagger\psi_{\mathrm{in}}
        -U_0^Q\psi_{\mathrm{in}}
    \right\|_2 \notag\\
    &=
    \left\|
        T_s^\dagger\psi_{\mathrm{in}}
        -\psi_{\mathrm{in}}
    \right\|_2 \notag\\
    &=
    O\left(e^{-\sqrt L/16}\right).
    \label{eq:replace-translated-input}
\end{align}

Applying \cref{lem:mellin-packet-translation} with $t=Q$ gives
\begin{equation}
    \left\|
        U_0^Q\psi_{\mathrm{in}}
        -e^{-i\pi Q/2}\psi_{u_{\mathrm{out}},w}
    \right\|_2
    =
    O\left(\frac{Q}{w^3}\right)
    = O(L^{-1/2})
    \label{eq:chosen-packet-propagation}
\end{equation}
since $Q=O(L)$ and $w^3=L^{3/2}$.
Combining this with \eqref{eq:replace-translated-input} yields, uniformly in
$s$,
\begin{equation}
    \left\|
        U_s^Q\psi_{\mathrm{in}}
        -
        e^{-i\pi Q/2}T_s\psi_{u_{\mathrm{out}},w}
    \right\|_2
    =
    O(L^{-1/2}).
    \label{eq:actual-to-ideal-final-state}
\end{equation}

This is the final state of the algorithm. It now remains to understand its localization. If position is measured in
$T_s\psi_{u_{\mathrm{out}},w}$, then
\[
    \ln|X-s|\sim N(u_{\mathrm{out}},w^2).
\]
Using $u_{\mathrm{out}}\le-r$,
\begin{align}
    \Pr\left[|X-s|\ge\frac12\right]
      &=
      \Pr\left[\ln|X-s|\ge-\ln2\right] \notag\\
      &\le
      \exp\left(
        -\frac{(r-\ln2)^2}{2w^2}
      \right)
      =
      e^{-\Omega(\sqrt L)}.
    \label{eq:ideal-final-tail}
\end{align}
Finally, measurement probabilities of two unit vectors differ by at most a
constant times their Euclidean distance. Equations
\eqref{eq:actual-to-ideal-final-state} and
\eqref{eq:ideal-final-tail} therefore give
\[
    \Pr\left[|X_s-s|\ge\frac12\right]
      =
      O(L^{-1/2}),
\]
which proves the lemma.
\end{proof}

\begin{corollary}[Zero-error query complexity]
\label{cor:continuum-query-complexity}
There is a zero-error ordered-search algorithm with expected quantum query complexity
\[
    \frac{\ln n}{\pi}+o(\ln n).
\]
\end{corollary}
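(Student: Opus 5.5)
The plan is to combine \Cref{lem:uniform-final-localization} with the classical two-comparison verification step from \Cref{sec:ordered-search}, and to wrap the result in a repeat-until-success loop. One round of the procedure runs as follows. First, prepare $\psi_{\mathrm{in}}$. Next, apply $U_s$ a total of $Q=\lceil (L+2r)/\pi\rceil$ times, where each application costs one discrete query: the continuum oracle $O_s$ is realized by coherently computing $\lfloor x\rfloor$, querying that bit (padded by $0$ on the left and $1$ on the right), and uncomputing, while $\mathcal F^\dagger O_0\mathcal F$ uses no queries. Then measure position, round to the nearest candidate transition point $\tilde s$, and check $\tilde s$ with two classical comparisons. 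If the check passes we output $\tilde s$; if it fails we start a fresh round. A wrong answer is never output, because every output has been verified, so the algorithm is zero-error.

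By \Cref{lem:uniform-final-localization}, each round succeeds with probability $p\ge 1-O(L^{-1/2})$, and this bound is uniform over $|s|\le n/2$; after centering the instance, this range covers every input of \Cref{prob:ordered-search}. Rounds are independent, so the number of rounds is geometric and its expectation is $1/p=1+O(L^{-1/2})$. Each round costs $Q+2$ queries. The expected total is therefore
\[
(Q+2)\bigl(1+O(L^{-1/2})\bigr)=\Bigl(\tfrac{L}{\pi}+O(L^{3/4})\Bigr)\bigl(1+O(L^{-1/2})\bigr)=\tfrac{\ln n}{\pi}+O\bigl((\ln n)^{3/4}\bigr),
\]
which lies within the claimed $\frac{\ln n}{\pi}+o(\ln n)$.

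The main obstacle is not the arithmetic but making sure the continuum algorithm is a legitimate finite query algorithm. States in $L^2(\mathbb R)$, exact preparation of $\psi_{\mathrm{in}}$, and exact application of $\mathcal F^\dagger O_0\mathcal F$ are idealizations. I would handle this in one of two ways. One option is to accept the continuum register as the computational model, as the paper implicitly does: only the query count matters, and the oracle reduces to a single discrete query per application. The other option is to argue that truncating and discretizing on a sufficiently fine, sufficiently large grid changes the final state by $o(1)$ in norm. Either way, $p$ is still $1-o(1)$, which is all the expectation bound needs. I would also confirm that the padding convention makes the continuum oracle exactly $(-1)^{\mathbf 1[x\ge s]}$ for integer $s$, so that \Cref{lem:uniform-final-localization} applies verbatim.
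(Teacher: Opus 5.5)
Your proposal is correct and matches the paper's proof: one query per application of $U_s$, a per-round success probability of $1-O((\ln n)^{-1/2})$ from \Cref{lem:uniform-final-localization}, constant-query classical verification, and restarting on failure, for an expected cost of $(Q+O(1))/p_n=\frac{\ln n}{\pi}+O((\ln n)^{3/4})$. Your remarks on the continuum idealization go beyond the paper's proof, which simply takes the continuum register as the computational model.
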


\begin{proof}
Each application of $U_s$ uses one query, and
\begin{align*}
    Q
      &=
      \left\lceil
        \frac{\ln n+2(\ln n)^{3/4}}{\pi}
      \right\rceil
      =
      \frac{\ln n}{\pi}
      +O\left((\ln n)^{3/4}\right)
      =
      \frac{\ln n}{\pi}+o(\ln n).
\end{align*}
By \cref{lem:uniform-final-localization}, measuring and rounding produces
the correct transition point with probability
\[
    p_n=1-O\left((\ln n)^{-1/2}\right)=1-o(1).
\]
The proposed transition point can be checked exactly with a constant number of comparison queries. If verification fails, restart the procedure. The resulting
algorithm is zero-error, and its expected number of queries is
\[
    \frac{Q+O(1)}{p_n}
      =
      \frac{\ln n}{\pi}+o(\ln n). \qedhere
\]
\end{proof}

\section{Exact solution}
\label{sec:green-mixture-ordered-search}

As an alternative to the algorithm described in \Cref{sec:zero-error}, in this section we give an exact algorithm that always uses $\frac{1}{\pi} \ln n + o(\log n)$ queries. The algorithm is constructed through the polynomial program described in \Cref{subsec:poly-form}, by exhibiting a sequence $q_0, \dots, q_k$ of nonnegative symmetric Laurent polynomials satisfying \Cref{prop:ordered-search} for $k=\frac{1}{\pi} \ln n + o(\log n)$.

We construct the solution using the \emph{solution family} $G_{\lambda,n}$ of positive Laurent polynomials labeled by a nonnegative
\emph{rate} $\lambda\geq0$.  The family is chosen so that the Fej\'er
polynomial is the $\lambda=0$ member, $G_{0, n}(z)=F_n(z)$, while taking
$\lambda\rightarrow\infty$ gives the constant polynomial~$1$, $\lim_{\lambda \rightarrow \infty} G_{\lambda, n}(z) = 1$.  The rate
$\lambda$ therefore serves as a progress measure for the algorithm, with the algorithm beginning at $\lambda=0$ and terminating at $\lambda=\infty$.

The query constraints in
\eqref{eq:alternating-parity} do not permit updating a given $G_{\lambda,n}$ to some $G_{\lambda',n}$ with $\lambda'>\lambda$ in a single query, since this would change both reversal components at
once.  Instead, we split the solution family into its reversal-symmetric and
reversal-antisymmetric components and update them alternately.  At every query
we increase $\lambda$ for the permitted component as much as possible
while preserving nonnegativity.  Thus the algorithm itself has a single
greedy rule.\footnote{Note that this is different from the greedy algorithm of Farhi, Goldstone, Gutmann, and Sipser \cite{FarhiGoldstoneGutmannSipser1999}.}

The analysis supplies lower bounds on the progress of this greedy rule.  A
uniform bound shows that a factor-three increase is always possible, and an additive increase is possible in the first move. As a technical note, we take solutions to be compact convex combinations of different rates $\lambda$, which we sometimes refer to as the packet. 
The packet is sufficiently concentrated that it nonetheless makes sense to consider its dominant central rate.
For each fixed $d<\pi$ sufficiently close to $\pi$, the rate of a
suitably chosen packet increases by at least $e^d$ per step 
once the rate is large enough.
Finally, once every rate in both component mixtures is at least $2n$, two final
queries terminate at the constant polynomial exactly. Therefore, the algorithm terminates in $\frac{1}{\pi} \ln n + o(\log n)$ queries.

\subsection{An exactly positive family}

We write the solution family, which interpolates between the Fej\'er kernel and constant polynomial, as follows.

\begin{definition}[Solution family]
For $\lambda>0$ and $1\le j<n$, let
\begin{align}
    g_{\lambda,n}(j)
      &\coloneqq\frac{\sinh\bigl(\lambda(1-j/n)\bigr)}{\sinh\lambda},
      &
    G_{\lambda,n}(e^{i\theta})
      &\coloneqq1+2\sum_{j=1}^{n-1}g_{\lambda,n}(j)\cos(j\theta).
    \label{eq:green-profile}
\end{align}
We define the family at $\lambda=0,\infty$ by its limiting values.
\end{definition}

Observe that $\lim_{\lambda\rightarrow0^+}g_{\lambda,n}(j)=1-\frac jn$, so $G_{0,n}=F_n$.  On the other hand,
$g_{\lambda,n}(j) = \Theta(e^{-\lambda j/n})$ for large $\lambda$, and hence
$G_{\infty,n}=1$.

\begin{lemma}[Exact positivity of the solution family]
\label{lem:green-positive}
For every $n\ge2$ and $\lambda\in[0,\infty]$,
$G_{\lambda,n}(e^{i\theta})$ is nonnegative for every
$\theta\in\mathbb R$.
\end{lemma}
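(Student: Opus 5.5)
The plan is to find a simple multiplier $m(\theta)>0$ such that $m(\theta)\,G_{\lambda,n}(e^{i\theta})$ collapses to a trigonometric polynomial with only two frequencies, whose sign is easy to read off. The hyperbolic profile is the solution of a discrete Helmholtz (second-difference) equation, which suggests multiplying by the symbol of the second-difference operator. The endpoints $\lambda=0$ and $\lambda=\infty$ follow by taking limits, since nonnegativity is preserved under pointwise limits; alternatively, $\lambda=0$ is the Fejér kernel, which is a squared modulus. So I fix $0<\lambda<\infty$ and $n\ge2$ from now on.

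\textbf{Step 1: the recurrence.} Put $r\coloneqq\lambda/n>0$ and extend the profile to all $0\le j\le n$ by $g(j)\coloneqq\sinh(\lambda-rj)/\sinh\lambda$. Then $g(0)=1$ and $g(n)=0$. The identity $\sinh(x-r)+\sinh(x+r)=2\cosh r\,\sinh x$ gives
\[
g(j-1)+g(j+1)=2\cosh(r)\,g(j)\qquad(1\le j\le n-1).
\]
Extend $g$ evenly to $|j|\le n$, so that $G_{\lambda,n}(z)=\sum_{|j|\le n-1}g(|j|)z^j$.

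\textbf{Step 2: multiply by the symbol.} Compute the coefficients of $(2\cosh r-z-z^{-1})\,G_{\lambda,n}(z)$.
\begin{itemize}
\item For $1\le|j|\le n-2$, the coefficient is $2\cosh(r)g(j)-g(j-1)-g(j+1)=0$.
\item For $|j|=n-1$, the term $g(n)$ is missing from $G_{\lambda,n}$, but $g(n)=0$, so the coefficient again vanishes.
\item For $|j|=n$, the coefficient is $-g(n-1)$.
\item For $j=0$, the coefficient is $2\cosh r-2g(1)$.
\end{itemize}
Hence
\[
(2\cosh r-2\cos\theta)\,G_{\lambda,n}(e^{i\theta})=2\cosh r-2g(1)-2g(n-1)\cos(n\theta).
\]
The prefactor $2\cosh r-2\cos\theta$ is strictly positive because $r>0$. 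Since $g(n-1)>0$, it therefore suffices to show
\[
g(1)+g(n-1)\le\cosh r.
\]

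\textbf{Step 3: the scalar inequality.} Using sum-to-product and $\sinh\lambda=2\sinh(\lambda/2)\cosh(\lambda/2)$,
\[
g(1)+g(n-1)=\frac{\sinh(\lambda-r)+\sinh r}{\sinh\lambda}=\frac{\cosh(\lambda/2-r)}{\cosh(\lambda/2)}.
\]
Expanding, $\cosh(\lambda/2-r)=\cosh(\lambda/2)\cosh r-\sinh(\lambda/2)\sinh r\le\cosh(\lambda/2)\cosh r$, since both sinh factors are nonnegative. This gives the required bound.

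The only real obstacle is spotting the multiplier in Step 2. Once it is found, the boundary bookkeeping at $|j|=n-1$ (which relies on $g(n)=0$) and the one-line hyperbolic inequality are routine. I would double-check the $j=0$ coefficient, which receives $g(1)$ from both neighbours, and the case $n=2$, where the middle range of Step 2 is empty; that case is still covered by the same computation.
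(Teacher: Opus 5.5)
Your proof is correct, but it takes a different route from the paper.

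\textbf{The paper's route.} It exhibits an explicit spectral factor. With $r=e^{-\lambda/n}$ and $p_r(z)=\sum_{k=0}^{n-1}r^kz^k$, it expands $|p_r(z)|^2$ and reads off $G_{\lambda,n}=\frac{1-r^2}{1-r^{2n}}|p_r|^2$. Nonnegativity then needs no inequality at all.

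\textbf{Your route.} You use the three-term recurrence of the hyperbolic profile. You multiply by the positive symbol $2\cosh(\lambda/n)-2\cos\theta$ and collapse $G_{\lambda,n}$ to two frequencies. Your boundary bookkeeping is right: the coefficients at $j=0$, $|j|=n-1$ and $|j|=n$, the $n=2$ case, and the identity $g(1)+g(n-1)=\cosh(\lambda/2-\lambda/n)/\cosh(\lambda/2)$ all check out.

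\textbf{How they relate.} The two arguments are the same closed form seen from opposite ends. Since $|1-rz|^2=2e^{-\lambda/n}(\cosh(\lambda/n)-\cos\theta)$ and $|1-(rz)^n|^2=2e^{-\lambda}(\cosh\lambda-\cos n\theta)$, the paper's factorization is equivalent to
\[
\bigl(2\cosh(\lambda/n)-2\cos\theta\bigr)\,G_{\lambda,n}(e^{i\theta})=\frac{2\sinh(\lambda/n)}{\sinh\lambda}\bigl(\cosh\lambda-\cos n\theta\bigr).
\]
This is your Step 2 after simplifying $2\cosh(\lambda/n)-2g(1)=2\sinh(\lambda/n)\coth\lambda$. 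In this form your Step 3 reduces to the observation $\cosh\lambda\ge1$.

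\textbf{What each buys.} The paper's version is a sum-of-squares certificate with an explicit Fej\'er--Riesz factor, so positivity is structural rather than the outcome of a scalar inequality. Yours explains why the hyperbolic profile is natural: it solves the discrete second-difference equation with boundary values $g(0)=1$, $g(n)=0$. Moreover, after clearing the denominator $\cosh(\lambda/n)-\cos\theta$, your identity is exactly the equal-rate, single-rate case of the computation behind \eqref{eq:finite-hybrid-identity} in Lemma~\ref{lem:finite-hybrid}. Your method is therefore the one that extends to the mixed-rate hybrids $q_{S_{\beta e^Y}+A_{\alpha e^Y}}$, where no single squared modulus is available and the paper must fall back on a quadratic-form criterion.
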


\begin{proof}
Let
\[
    r=e^{-\lambda/n},
    \qquad
    p_r(z)=\sum_{k=0}^{n-1}r^kz^k,
    \qquad |z|=1.
\]
Since $z^{-1}=\overline z$ for $|z|=1$, we have
\begin{align*}
|p_r(z)|^2
&=p_r(z)p_r(z^{-1})\\
&=\sum_{k=0}^{n-1}r^{2k}
  +\sum_{j=1}^{n-1}
    \left(\sum_{\ell=0}^{n-1-j}r^{2\ell+j}\right)
      (z^j+z^{-j})\\
&=\frac{1-r^{2n}}{1-r^2}
  +\sum_{j=1}^{n-1}
    \frac{r^j(1-r^{2(n-j)})}{1-r^2}
      (z^j+z^{-j}).
\end{align*}
Multiplying by $(1-r^2)/(1-r^{2n})$ gives
\begin{align*}
\frac{1-r^2}{1-r^{2n}}|p_r(z)|^2
&=
1+\sum_{j=1}^{n-1}
  \frac{r^j-r^{2n-j}}{1-r^{2n}}(z^j+z^{-j})\\
&=
1+\sum_{j=1}^{n-1}
  \frac{\sinh(\lambda(1-j/n))}{\sinh\lambda}
  (z^j+z^{-j})\\
&=G_{\lambda,n}(z).
\end{align*}
This is nonnegative when $\lambda>0$. The cases
$\lambda=0,\infty$ follow by continuity.
\end{proof}

We next split the coefficient vector of the solution family, excluding the constant coefficient, into its two reversal components.

\begin{definition}
Let
\[
    S_\lambda \coloneqq P_+G_\lambda,
    \qquad
    A_\lambda \coloneqq P_-G_\lambda,
\]
and write $s_{\lambda,n}$ and $a_{\lambda,n}$ for their respective
coefficient vectors.  Their entries are
\begin{align}
    s_{\lambda,n}(j)
      &=\frac{\cosh\bigl(\lambda(1/2-j/n)\bigr)}
              {2\cosh(\lambda/2)},
      &
    a_{\lambda,n}(j)
      &=\frac{\sinh\bigl(\lambda(1/2-j/n)\bigr)}
              {2\sinh(\lambda/2)}.
    \label{eq:green-parts}
\end{align}
As above, the expressions at $\lambda=0,\infty$ are interpreted by
continuity.
Note that since $P_\pm$ act only on the nonconstant terms, $G_\lambda = 1 + S_\lambda + A_\lambda$.
\label{def:sym-antisym-green}
\end{definition}

We consider convex combinations of these components that we call \emph{packets}.  Let $Y$ be an auxiliary
real random variable with compactly supported probability density $h$.  For a center
$\alpha\in[0,\infty]$, define
\begin{align}
    S_{\alpha e^Y}(j)
       &\coloneqq \mathbb E\!\left[s_{\alpha e^Y\!,n}(j)\right],
       &
    A_{\alpha e^Y}(j)
       &\coloneqq \mathbb E\!\left[a_{\alpha e^Y\!,n}(j)\right].
    \label{eq:log-mixture}
\end{align}
The variable $Y$ only specifies a convex combination; no randomness is used
by the resulting algorithm.  Since $\ln\lambda=\ln\alpha+Y$,
the density $h$ fixes the shape of a packet in log-rate space, while changing
$\alpha$ translates the packet.  Replacing $\alpha$ by $e^d\alpha$
translates it by logarithmic distance~$d$.

For clarity, we write $q_{S_{\alpha e^Y}+A_{\beta e^Y}}$
to denote the normalized Laurent polynomial whose symmetric
coefficient vector is $S_{\alpha e^Y}$ and whose antisymmetric nonconstant
coefficient vector is $A_{\beta e^Y}$.  In particular,
\[
    q_{S_{\alpha e^Y}+A_{\alpha e^Y}}
      =\mathbb E\!\left[G_{\alpha e^Y\!,n}\right]\ge0
\]
by \cref{lem:green-positive}.

\subsection{The greedy solution family}

This subsection describes the polynomial family that will constitute our solution to the polynomial program described in \Cref{prop:ordered-search}.
Fix $n$ and a density $h$ for the remainder of the construction.  For
$\alpha\in[0,\infty]$, define the two parity-dependent transition maps
\begin{align}
    \Gamma_S(\alpha)
      &=
      \sup\left\{
          \beta\in[\alpha,\infty):
          q_{S_{\beta e^Y}+A_{\alpha e^Y}}\ge0
      \right\},
      \label{eq:greedy-S-map}\\
    \Gamma_A(\alpha)
      &=
      \sup\left\{
          \beta\in[\alpha,\infty):
          q_{S_{\alpha e^Y}+A_{\beta e^Y}}\ge0
      \right\}.
      \label{eq:greedy-A-map}
\end{align}
Here and below, polynomial inequalities must hold at every point on the
unit circle.

Starting from $\alpha_0=0$, recursively define
\begin{equation}
    \alpha_t=
    \begin{cases}
       \Gamma_S(\alpha_{t-1}),&t\ \text{odd},\\
       \Gamma_A(\alpha_{t-1}),&t\ \text{even}.
    \end{cases}
    \label{eq:greedy-scale-recurrence}
\end{equation}
The associated polynomial path is
\begin{equation}
    q_0=G_{0,n}=F_n,
    \qquad
    q_t=
    \begin{cases}
       q_{S_{\alpha_t e^Y}+A_{\alpha_{t-1}e^Y}},
          &t\ \text{odd},\\
       q_{S_{\alpha_{t-1}e^Y}+A_{\alpha_t e^Y}},
          &t\ \text{even}.
    \end{cases}
    \label{eq:greedy-polynomial-path}
\end{equation}
By construction, every $q_t$ is nonnegative.  Moreover, when $t$ is odd only the reversal-symmetric component changes from $q_{t-1}$ to $q_t$, and when $t$ is even only the reversal-antisymmetric component changes.  Thus
\eqref{eq:alternating-parity} holds automatically.

The definition has a simple interpretation: at each query, move the permitted
component as far to the right in log-rate space as possible while remaining
inside the nonnegative cone.  If one of the transition maps takes the value
$\infty$, one further query sends the other component to $\infty$ and reaches
$q=1$.  We may therefore assume in the analysis below that the scales remain
finite until the endpoint criterion is met.

It remains to show that the recurrence
\eqref{eq:greedy-scale-recurrence} advances sufficiently quickly.  We first
derive an exact all-scale nonnegativity criterion and then obtain the progress
bounds needed to solve the recurrence.

\subsection{All-scale nonnegativity conditions}

We first express a polynomial with different component rates as a quadratic form. This
gives a sufficient condition for nonnegativity. We then use that condition in two ways: to
prove a factor-three increase at every positive rate, and to reduce larger increases to an
inequality between two convolutions on the real line.

For $\lambda>0$, define
\begin{equation}
U_\lambda(\theta)
\coloneqq\frac{\sinh(\lambda/n)}{\cosh(\lambda/n)-\cos\theta},
\qquad
V_\lambda(\theta)
\coloneqq-\frac{\sin\theta}{\cosh(\lambda/n)-\cos\theta}.
\label{eq:UV-definitions}
\end{equation}
The denominator is positive, so $U_\lambda(\theta)>0$. These functions
arise by summing the geometric sequences in the solution family coefficients.
We suppress their dependence on $\theta$ when it is fixed.

\begin{lemma}[A sufficient condition for nonnegativity]
\label{lem:finite-hybrid}
For positive centers $\alpha,\beta$,
\begin{align}
q_{S_{\beta e^Y}+A_{\alpha e^Y}}(e^{i\theta})
={}&\mathbb E\!\left[\tanh(\beta e^Y/2)U_{\beta e^Y}\right]
\cos^2\!\left(\frac{n\theta}{2}\right)\notag\\
&+\mathbb E\!\left[\coth(\alpha e^Y/2)U_{\alpha e^Y}\right]
\sin^2\!\left(\frac{n\theta}{2}\right)\notag\\
&+\mathbb E\!\left[V_{\alpha e^Y}-V_{\beta e^Y}\right]
\sin\!\left(\frac{n\theta}{2}\right)
\cos\!\left(\frac{n\theta}{2}\right).
\label{eq:finite-hybrid-identity}
\end{align}
Consequently, this polynomial is nonnegative if, for every $\theta$,
\begin{equation}
\left|\mathbb E\!\left[V_{\alpha e^Y}-V_{\beta e^Y}\right]\right|^2
\le
4\mathbb E\!\left[\tanh(\beta e^Y/2)U_{\beta e^Y}\right]
\mathbb E\!\left[\coth(\alpha e^Y/2)U_{\alpha e^Y}\right].
\label{eq:finite-determinant-condition}
\end{equation}
Interchanging $\alpha$ and $\beta$ gives a sufficient condition for
$q_{S_{\alpha e^Y}+A_{\beta e^Y}}\ge0$.
\end{lemma}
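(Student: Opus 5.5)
Since $q_{S_{\beta e^Y}+A_{\alpha e^Y}}$ and all three right-hand coefficients in \eqref{eq:finite-hybrid-identity} are linear (expectations) in the mixture, we prove the identity for a single pair of rates $\beta,\alpha>0$ and then take expectations over $Y$. Writing $q=1+\sum_{j=1}^{n-1}(s_{\beta,n}(j)+a_{\alpha,n}(j))(z^j+z^{-j})$, the symmetric part of the left side is $\tfrac12\bigl(1+2\sum_j 2s_{\beta,n}(j)\cos j\theta\bigr)$ and the antisymmetric part is $2\sum_j a_{\alpha,n}(j)\cos j\theta$. We evaluate each in closed form separately.

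For the symmetric part, put $x=\beta/n$ and use $s_{\beta,n}(j)=\frac{e^{\beta/2}e^{-xj}+e^{-\beta/2}e^{xj}}{4\cosh(\beta/2)}$. Summing the geometric series $\sum_{j=0}^{n}e^{\mp xj}e^{\pm ij\theta}$ (the endpoints $j=0$ and $j=n$ carry equal weight $\tfrac12$ after symmetrization, so the sum is naturally over $0\le j\le n$ with half weights at the ends), we expect a Poisson-kernel-type expression
\[
\tfrac12+\textstyle\sum_{j=1}^{n-1}2s_{\beta,n}(j)\cos j\theta\;\propto\; \frac{\sinh x\,\bigl(1+\cos n\theta\bigr)\cdot(\cdots)}{\cosh x-\cos\theta},
\]
and the claim is that it equals $\tanh(\beta/2)U_\beta\cos^2(n\theta/2)-\tfrac12 V_\beta\sin(n\theta/2)\cos(n\theta/2)$ plus the constant bookkeeping. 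The mechanism is that $e^{\pm in\theta}$ terms produced by the upper endpoint combine with $e^{\mp\beta}$ and the $\cosh(\beta/2)$ normalization into $\cos^2(n\theta/2)$ and $\sin(n\theta)$ terms; the $\tanh$ arises from $\sinh(\beta/2)/\cosh(\beta/2)$. Symmetrically, for the antisymmetric part with $\sinh(\lambda(1/2-j/n))/(2\sinh(\lambda/2))$, the same summation yields $\coth(\alpha/2)U_\alpha\sin^2(n\theta/2)+\tfrac12 V_\alpha\sin(n\theta/2)\cos(n\theta/2)$ (up to matching the constant term). Adding the two gives \eqref{eq:finite-hybrid-identity} with cross coefficient $V_\alpha-V_\beta$. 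As a consistency check, at $\alpha=\beta$ the right side should reduce to $G_{\lambda,n}$, whose closed form from \cref{lem:green-positive} is $\frac{1-r^2}{1-r^{2n}}|p_r|^2$. Matching the two expressions verifies the constants and signs, in particular the sign of $V$ and the factor $\tfrac12$.

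The consequence is then immediate. At fixed $\theta$, the right side is a real quadratic form $a c^2+b s^2+m sc$ in $(c,s)=(\cos\frac{n\theta}2,\sin\frac{n\theta}2)$ with $a,b>0$, since $U>0$ and $\tanh,\coth>0$ for positive arguments. Such a form is nonnegative for all $(c,s)$ iff $m^2\le 4ab$, which is \eqref{eq:finite-determinant-condition}; we only need the sufficient direction, applied at the specific point $(c,s)$. Swapping roles gives the other statement, since the identity for $q_{S_{\alpha e^Y}+A_{\beta e^Y}}$ is the same with $\alpha\leftrightarrow\beta$.

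The main obstacle is the bookkeeping in the geometric summations: the endpoint contributions and the normalization factors must cancel exactly into the $\cos^2$, $\sin^2$, and $\sin\cos$ structure. To control this, we would first verify the decomposition at $\alpha=\beta$ against \cref{lem:green-positive}. We would also use the characterization that the symmetric part is determined by its values at $z^n=-1$ and the antisymmetric part by its values at $z^n=1$; this predicts which of $\cos^2$ or $\sin^2$ vanishes at those roots and fixes the coefficients there.
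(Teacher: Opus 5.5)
Your route is the paper's: evaluate each reversal component at a single rate by geometric summation, average over $Y$, and conclude from the discriminant of a binary quadratic form in $\bigl(\cos(n\theta/2),\sin(n\theta/2)\bigr)$. That last step is fine. The gap is the identity itself, which is the substance of the lemma. You only describe what you expect, and the component formulas you do commit to have the wrong off-diagonal coefficient. Summing $\sum_{j=1}^{n-1}z^j=(z-z^n)/(1-z)$ at $z=e^{\pm\lambda/n+i\theta}$, clearing the denominator $|e^{\lambda/n}-e^{i\theta}|^2=2e^{\lambda/n}(\cosh(\lambda/n)-\cos\theta)$, and applying sum-to-product to $\cos\theta\pm\cos((n-1)\theta)$ gives
\begin{align*}
2\sum_{j=1}^{n-1}s_{\lambda,n}(j)\cos(j\theta)
&=\bigl(\tanh(\lambda/2)U_\lambda-1\bigr)\cos^2(n\theta/2)-V_\lambda\sin(n\theta/2)\cos(n\theta/2),\\
2\sum_{j=1}^{n-1}a_{\lambda,n}(j)\cos(j\theta)
&=\bigl(\coth(\lambda/2)U_\lambda-1\bigr)\sin^2(n\theta/2)+V_\lambda\sin(n\theta/2)\cos(n\theta/2),
\end{align*}
so the $V_\lambda$ coefficients are $\mp1$, not $\mp\frac12$.

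The $-1$'s are absorbed by the constant term via $1=\cos^2(n\theta/2)+\sin^2(n\theta/2)$. Equivalently, the half-weighted endpoint values $s_{\lambda,n}(0)=s_{\lambda,n}(n)=\frac12$ and $a_{\lambda,n}(0)=-a_{\lambda,n}(n)=\frac12$ contribute exactly $\cos^2(n\theta/2)$ and $\sin^2(n\theta/2)$, so your half-weight intuition is the right bookkeeping. With your $\frac12+\cdots$ split, the leftovers are $\mp\frac12\cos(n\theta)$; these cancel between the two parts but are not constants. With your $\pm\frac12 V$ terms, adding the parts would give the cross coefficient $\frac12\mathbb E[V_{\alpha e^Y}-V_{\beta e^Y}]$, contradicting the conclusion you state. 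The resulting discriminant condition would then be off by a factor of $4$, and the argument of \Cref{lem:pi-speed-limit} would move the speed limit from $\pi$ to $2\pi$.

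Your planned safeguards cannot catch this. Suppose the symmetric part carries $-c_1V_\lambda$ and the antisymmetric part $+c_2V_\lambda$. At $\alpha=\beta$ the cross term is $(c_2-c_1)V_\lambda$, so matching $G_{\lambda,n}$ from \Cref{lem:green-positive} forces only $c_1=c_2$. It never fixes the common value, which is exactly what enters \eqref{eq:finite-determinant-condition} once $\alpha\neq\beta$. Evaluating at $z^n=\pm1$ is equally blind, since $\sin(n\theta/2)\cos(n\theta/2)=\frac12\sin(n\theta)$ vanishes at all those points; they only fix the diagonal coefficients. (Also, the nonconstant symmetric component vanishes at $z^n=-1$ and the antisymmetric one at $z^n=1$, so neither is determined by its values there.) To close the gap you must actually carry out the summation as above, or at least test one component alone at a point with $\sin(n\theta)\neq0$. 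For example, for $n=2$ one has $2s_{\lambda,2}(1)\cos\theta=\cos\theta/\cosh(\lambda/2)$, which matches coefficient $-1$ on $V_\lambda$ and not $-\frac12$.
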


\begin{proof}
Expanding the hyperbolic functions in \eqref{eq:green-parts} into
exponentials and summing the resulting geometric sequences gives
\begin{align*}
2\sum_{j=1}^{n-1}s_{\lambda,n}(j)\cos(j\theta)
={}&\bigl(\tanh(\lambda/2)U_\lambda-1\bigr)
\cos^2\!\left(\frac{n\theta}{2}\right)
-V_\lambda\sin\!\left(\frac{n\theta}{2}\right)
\cos\!\left(\frac{n\theta}{2}\right),\\
2\sum_{j=1}^{n-1}a_{\lambda,n}(j)\cos(j\theta)
={}&\bigl(\coth(\lambda/2)U_\lambda-1\bigr)
\sin^2\!\left(\frac{n\theta}{2}\right)
+V_\lambda\sin\!\left(\frac{n\theta}{2}\right)
\cos\!\left(\frac{n\theta}{2}\right).
\end{align*}
For example, each geometric sum is evaluated using
$\sum_{j=1}^{n-1}z^j=(z-z^n)/(1-z)$, with
$z=e^{\pm\lambda/n+i\theta}$. Adding the constant term and averaging
the two identities at their respective rates proves
\eqref{eq:finite-hybrid-identity}.

The right-hand side is a quadratic form in
$\cos(n\theta/2)$ and $\sin(n\theta/2)$. Its diagonal coefficients are
positive. Condition \eqref{eq:finite-determinant-condition} says that
its determinant is nonnegative, and therefore makes the quadratic
form nonnegative.
\end{proof}

To apply this condition, we express $U_\lambda$ and $V_\lambda$ as sums
of simpler functions on the real line:
\begin{align}
U_\lambda(\theta)
&=\sum_{k\in\mathbb Z}
\frac{2(\lambda/n)}{(\lambda/n)^2+(\theta+2\pi k)^2},
\label{eq:U-periodization}\\
V_\lambda(\theta)
&=-\lim_{K\to\infty}\sum_{k=-K}^{K}
\frac{2(\theta+2\pi k)}{(\lambda/n)^2+(\theta+2\pi k)^2}.
\label{eq:V-periodization}
\end{align}
Summing translates of a function in this way is called
\emph{periodization}: it produces a $2\pi$-periodic function of
$\theta$. To justify these identities, use the partial-fraction
expansion
\[
\coth z=\frac1z+\sum_{k=1}^{\infty}\frac{2z}{z^2+\pi^2k^2},
\]
which follows by logarithmically differentiating
$\sinh z=z\prod_{k\ge1}(1+z^2/(\pi^2k^2))$.
Taking real and imaginary parts at
$z=(\lambda/n+i\theta)/2$ gives
\eqref{eq:U-periodization}--\eqref{eq:V-periodization}.
The sum for $U_\lambda$ converges absolutely. Although the sum for
$V_\lambda$ is taken symmetrically, the difference at two positive
rates also converges absolutely: its summands are $O(|k|^{-3})$.

\begin{lemma}[Progress at every positive rate]
\label{lem:uniform-factor-three}
For every $1\le r\le3$, $n\ge2$, $\alpha>0$, and common probability
density $h$, both polynomials
\[
q_{S_{r\alpha e^Y}+A_{\alpha e^Y}},
\qquad
q_{S_{\alpha e^Y}+A_{r\alpha e^Y}}
\]
are nonnegative. In particular,
\begin{equation}
\Gamma_S(\alpha)\ge3\alpha,
\qquad
\Gamma_A(\alpha)\ge3\alpha.
\label{eq:uniform-greedy-progress}
\end{equation}
\end{lemma}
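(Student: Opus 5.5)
The plan is to reduce both nonnegativity claims, via \Cref{lem:finite-hybrid}, to a pointwise-in-$Y$ inequality, and then prove that inequality as an elementary estimate in three real variables. Fix $\theta$, write $a=\alpha e^{Y}$ and $b=r\alpha e^{Y}$, and set $X_1=\tanh(b/2)U_b$, $X_2=\coth(a/2)U_a$, $W=V_a-V_b$ (for the first polynomial; for the second, swap the roles of the $\tanh$ and $\coth$ factors). Both $X_1$ and $X_2$ are positive. Suppose we can show, for every value of $Y$,
\[
|W|^2\le 4X_1X_2 .
\]
Then Cauchy--Schwarz gives $|\mathbb E W|\le \mathbb E|W|\le 2\,\mathbb E\sqrt{X_1X_2}\le 2\sqrt{\mathbb E X_1\,\mathbb E X_2}$, which is exactly condition \eqref{eq:finite-determinant-condition}. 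So the common density $h$ plays no role, and it suffices to treat a single pair of rates $a\le b\le 3a$, that is, $b=ra$ with $1\le r\le 3$.

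Next we make the pointwise inequality explicit. Put $x=a/n$, $y=b/n$, and $c=\cos\theta$. From \eqref{eq:UV-definitions},
\[
W=-\sin\theta\,\frac{\cosh y-\cosh x}{(\cosh x-c)(\cosh y-c)},
\]
so the condition becomes
\[
(1-c^2)(\cosh y-\cosh x)^2\le 4\kappa\,\sinh x\,\sinh y\,(\cosh x-c)(\cosh y-c).
\]
The prefactor $\kappa$ depends on which polynomial we treat:
\begin{itemize}
\item For $q_{S_{b}+A_{a}}$ we have $\kappa=\tanh(b/2)\coth(a/2)\ge1$.
\item For $q_{S_{a}+A_{b}}$ we have $\kappa=\tanh(a/2)\coth(b/2)=\tanh(t)/\tanh(rt)$ with $t=a/2$.
\end{itemize}
In the second case, subadditivity of $\tanh$ (concavity on $[0,\infty)$, with $\tanh 0=0$) gives $\tanh(rt)\le r\tanh t$, hence $\kappa\ge 1/r$. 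It therefore suffices to prove, for all $x>0$, $1\le r\le 3$, $y=rx$ and $c\in[-1,1]$,
\[
r(1-c^2)(\cosh y-\cosh x)^2\le 4\sinh x\,\sinh y\,(\cosh x-c)(\cosh y-c).
\]
This single inequality covers both polynomials, since $1/r\le 1\le\kappa$ in the first case.

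The main obstacle is this last inequality. Our first step is to optimize over $c$: the difference (RHS minus LHS) is a quadratic in $c$ with positive leading coefficient, so we need its discriminant to be nonpositive, or its minimizer to lie outside $[-1,1]$. The small-$x$ regime is tight, which is why $r=3$ is the limit. Near $c=1$, with $1-c\approx\epsilon$, the inequality reads approximately
\[
2r\epsilon\,\tfrac14(y^2-x^2)^2\le 4xy\,(\tfrac{x^2}{2}+\epsilon)(\tfrac{y^2}{2}+\epsilon).
\]
Optimizing over $\epsilon$ (at $\epsilon=xy/2$) reduces this to $r(y-x)^2\le 4xy$, i.e.\ $(r-1)^2\le 4$, which holds exactly for $r\le3$.

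For general $x$, the plan is to substitute $p=\cosh x$ and $q=\cosh y$ and bound the worst case over $c$. Writing $1-c^2=(1-c)(1+c)$, we use $\cosh x-c\ge 1-c$ and $\cosh y-c\ge$ (a suitable multiple of) $1+c$, or else compute the discriminant directly. We then reduce to the one-variable hyperbolic inequality
\[
r\,(\cosh rx-\cosh x)^2\le 4\sinh x\,\sinh rx\,\cdot(\text{an explicit factor}\ge\text{its small-}x\text{ value}),
\]
and verify it by monotonicity in $x$ (for instance via $\sinh$-product identities such as $\cosh rx-\cosh x=2\sinh\frac{(r+1)x}{2}\sinh\frac{(r-1)x}{2}$). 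Finally, \eqref{eq:uniform-greedy-progress} follows immediately from the definitions \eqref{eq:greedy-S-map}--\eqref{eq:greedy-A-map} by taking $r=3$.
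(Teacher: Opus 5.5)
Your reduction is sound and follows the paper's skeleton: the determinant condition \eqref{eq:finite-determinant-condition}, the weight bound $\tanh(\lambda/2)\coth(r\lambda/2)\ge 1/r$ from concavity of $\tanh$, and treating the mixture pointwise in $Y$. The paper instead just averages nonnegative single-rate hybrids, which amounts to the same thing. The gap is the step you call the main obstacle. The single-rate inequality
\[
r(1-c^2)(\cosh y-\cosh x)^2\le 4\sinh x\,\sinh y\,(\cosh x-c)(\cosh y-c),\qquad y=rx,\ c\in[-1,1],
\]
is never proved; beyond the small-$x$ heuristic you only offer placeholders (``a suitable multiple'', ``an explicit factor''). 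Moreover, the concrete decoupling you propose cannot reach $r=3$. The bound $\cosh x-c\ge 1-c$ discards the $x^2/2$ term, and that term is what matters in the tight regime $1-c\approx xy/2$ that your own heuristic identifies. Since $(\cosh y-c)/(1+c)$ is decreasing in $c$, the best admissible multiple is $(\cosh y-1)/2$. The resulting sufficient condition $r(\cosh y-\cosh x)^2\le 2(\cosh y-1)\sinh x\sinh y$ becomes $(r^2-1)^2\le 4r^2$ as $x\to0$, i.e.\ $r\le 1+\sqrt2$. So that route fails on $(1+\sqrt2,3]$.

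The paper sidesteps the $\theta$-dependence by periodization. It writes $U_\lambda,V_\lambda$ as sums over $\tau=\theta+2\pi k$ of $u_a=2a/(a^2+\tau^2)$ and $v_a=2\tau/(a^2+\tau^2)$. Each summand then satisfies $(v_a-v_b)^2/(4u_au_b)\le (r-1)^2/(4r)$ uniformly in $\tau$, by the identity $(a^2+\tau^2)(b^2+\tau^2)-\tau^2(a+b)^2=(\tau^2-ab)^2$. Against the weight product $\ge 1/r$ this needs only $(r-1)^2\le 4$, and Cauchy--Schwarz over $k$ reassembles the sums.

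Alternatively, your fallback ``compute the discriminant directly'' does close the gap if carried out, and gives a genuinely different closed-form proof. Put $A=\cosh(x+y)$ and $B=\cosh(y-x)$. Then $\sinh x\sinh y=(A-B)/2$, $\cosh x\cosh y=(A+B)/2$, and $(\cosh y-\cosh x)^2=(A-1)(B-1)$. For $r>1$ ($r=1$ is trivial), nonpositivity of the discriminant in $c$ reduces to
\[
\bigl(t-(r+1)\bigr)\bigl((r-1)t+1\bigr)\ge 0,
\qquad
t=\frac{\sinh^2\bigl((r+1)x/2\bigr)}{\sinh^2\bigl((r-1)x/2\bigr)},
\]
i.e.\ $t\ge r+1$. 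Since $t$ increases in $x$ (because $v\coth v$ is increasing) with limit $\bigl((r+1)/(r-1)\bigr)^2$ at $0$, this holds for all $x>0$ whenever $(r-1)^2\le r+1$, i.e.\ $r\le 3$. Without one of these two computations, the proposal does not establish the lemma.
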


\begin{proof}
We first compare two individual rates $\lambda$ and $r\lambda$.
For a fixed summand in
\eqref{eq:U-periodization}--\eqref{eq:V-periodization}, put
$\tau=\theta+2\pi k$, $a=\lambda/n$, and $b=r\lambda/n$. Write
\[
u_a=\frac{2a}{a^2+\tau^2},
\qquad
v_a=\frac{2\tau}{a^2+\tau^2},
\]
and similarly for $b$. Then
\begin{equation}
\frac{(v_a-v_b)^2}{4u_au_b}
=\frac{\tau^2(b^2-a^2)^2}
{4ab(a^2+\tau^2)(b^2+\tau^2)}
\le\frac{(b-a)^2}{4ab}
=\frac{(r-1)^2}{4r}.
\label{eq:single-alias-ratio}
\end{equation}
The inequality follows from
$(a^2+\tau^2)(b^2+\tau^2)-\tau^2(a+b)^2=(\tau^2-ab)^2\ge0$.

For the polynomial whose antisymmetric component has rate $r\lambda$,
the product of the two weights in
\eqref{eq:finite-determinant-condition} is
\[
\tanh(\lambda/2)\coth(r\lambda/2)
=\frac{\tanh(\lambda/2)}{\tanh(r\lambda/2)}\ge\frac1r
\]
by concavity of $\tanh$ on $[0,\infty)$. For the other polynomial the
product is its reciprocal, which is at least one. Since
$(r-1)^2/(4r)\le1/r$ when $r\le3$,
\eqref{eq:single-alias-ratio} gives the required determinant bound
for each summand, with either assignment of the two weights.

The triangle inequality and Cauchy--Schwarz combine these bounds over
$k$: if $|b_k|\le2\sqrt{a_kc_k}$ with $a_k,c_k\ge0$, then
\[
\left|\sum_k b_k\right|
\le2\sum_k\sqrt{a_kc_k}
\le2\sqrt{\left(\sum_k a_k\right)\left(\sum_k c_k\right)}.
\]
Thus \eqref{eq:finite-determinant-condition} holds for both single-rate
polynomials. Averaging them over $\lambda=\alpha e^Y$ preserves
nonnegativity and proves the claim.
\end{proof}

For larger steps, averaging over rates becomes useful. Define
\begin{equation}
\mathfrak a_h(x)
\coloneqq\int_{\mathbb R}\frac{h(y)}{2\cosh(x-y)}\,dy,
\qquad
\mathfrak b_h(x)
\coloneqq\int_{\mathbb R}\frac{h(y)}{1+e^{2(y-x)}}\,dy.
\label{eq:continuum-ab}
\end{equation}
The next lemma reduces the required progress bound to a single
inequality involving these functions. Its hypothesis does not depend
on $n$.

\begin{lemma}[A sufficient
condition for progress]
\label{lem:alias-transfer}
Suppose that $d>0$, $0<\rho<1$, and
\begin{equation}
\mathfrak b_h(x)-\mathfrak b_h(x-d)
\le2\rho\sqrt{\mathfrak a_h(x)\mathfrak a_h(x-d)}
\qquad(x\in\mathbb R).
\label{eq:continuum-determinant}
\end{equation}
Let $\beta=e^d\alpha$. If all rates in both mixtures are at least $M>0$
and
\begin{equation}
\tanh(M/2)>\rho^2,
\label{eq:large-rate-slack}
\end{equation}
then both $q_{S_{\beta e^Y}+A_{\alpha e^Y}}$ and
$q_{S_{\alpha e^Y}+A_{\beta e^Y}}$ are nonnegative for every $n\ge2$.
\end{lemma}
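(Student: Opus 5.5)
The plan is to verify the determinant criterion \eqref{eq:finite-determinant-condition} of \Cref{lem:finite-hybrid} alias by alias, as in \Cref{lem:uniform-factor-three}. This time the averaging over $Y$ is done inside each alias rather than after the fact. The point is that once the rate enters through $\ln(n|\tau|/\lambda)$, each single-alias summand becomes an $h$-convolution, namely exactly $\mathfrak a_h$ or $\mathfrak b_h$. The hypothesis \eqref{eq:continuum-determinant} is then precisely the per-alias determinant inequality with an extra factor $\rho$ of slack. That slack is spent on the hyperbolic weights via \eqref{eq:large-rate-slack}.

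\emph{Step 1: per-alias identities.} Fix $\theta$ and an alias $\tau=\theta+2\pi k\neq0$. For a rate $\lambda=\alpha e^Y$, put $a=\lambda/n$ and $x\coloneqq\ln(n|\tau|/\alpha)$, so that $a/|\tau|=e^{Y-x}$. With $u_a=2a/(a^2+\tau^2)$ and $v_a=2\tau/(a^2+\tau^2)$ as before, one checks
\[
u_a=\frac{1}{|\tau|}\cdot\frac{2e^{Y-x}}{1+e^{2(Y-x)}}=\frac{1}{|\tau|\cosh(x-Y)},
\qquad
v_a=\frac{2}{\tau}\cdot\frac{1}{1+e^{2(Y-x)}}.
\]
Taking expectations gives $\mathbb E[u_{\alpha e^Y}]=\frac{2}{|\tau|}\mathfrak a_h(x)$ and $\mathbb E[v_{\alpha e^Y}]=\frac{2}{\tau}\mathfrak b_h(x)$. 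Replacing $\alpha$ by $\beta=e^d\alpha$ replaces $x$ by $x-d$. Hence, using \eqref{eq:continuum-determinant},
\[
\bigl|\mathbb E[v_{\alpha e^Y}-v_{\beta e^Y}]\bigr|
=\frac{2}{|\tau|}\bigl(\mathfrak b_h(x)-\mathfrak b_h(x-d)\bigr)
\le2\rho\sqrt{\mathbb E[u_{\alpha e^Y}]\,\mathbb E[u_{\beta e^Y}]}.
\]
Here I use that $\mathfrak b_h$ is increasing, so the difference is nonnegative.

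\emph{Step 2: summing aliases.} Interchange $\mathbb E$ with the periodization sums \eqref{eq:U-periodization}--\eqref{eq:V-periodization}. This is justified for $U$ by positivity (Tonelli). For $V_{\alpha e^Y}-V_{\beta e^Y}$ it is justified because the summand differences are $O(|k|^{-3})$ uniformly over the compact support of $h$, so dominated convergence applies. The triangle inequality and Cauchy--Schwarz, exactly as in \Cref{lem:uniform-factor-three}, then give
\[
\bigl|\mathbb E[V_{\alpha e^Y}-V_{\beta e^Y}]\bigr|\le2\rho\sqrt{\mathbb E[U_{\alpha e^Y}]\,\mathbb E[U_{\beta e^Y}]}.
\]
The single alias $\tau=0$, which occurs only at $\theta\in2\pi\mathbb Z$, is a measure-zero set of $\theta$. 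I will handle it by continuity of both sides of \eqref{eq:finite-hybrid-identity} in $\theta$.

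\emph{Step 3: hyperbolic weights.} All rates are at least $M$. Since $U>0$, we have $\mathbb E[\tanh(\lambda/2)U_\lambda]\ge\tanh(M/2)\,\mathbb E[U_\lambda]$ and $\mathbb E[\coth(\lambda/2)U_\lambda]\ge\mathbb E[U_\lambda]$, at either center. Therefore the right-hand side of \eqref{eq:finite-determinant-condition} is at least $4\tanh(M/2)\,\mathbb E[U_{\alpha e^Y}]\,\mathbb E[U_{\beta e^Y}]$. This exceeds $4\rho^2\,\mathbb E[U_{\alpha e^Y}]\,\mathbb E[U_{\beta e^Y}]$ by \eqref{eq:large-rate-slack}, which in turn is at least the squared left-hand side by Step 2. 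The bound is symmetric in which center carries $\tanh$ and which carries $\coth$. It therefore applies to both polynomials, and \Cref{lem:finite-hybrid} concludes. Nothing depends on $n$ beyond $n\ge2$.

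The main obstacle I expect is Step 1. One must get the change of variables exactly right, including the sign of $\tau$ in $v$ and the factor $2$ conventions, so that the convolutions match \eqref{eq:continuum-ab} with the shift by $d$ in the correct direction. One must also confirm monotonicity of $\mathfrak b_h$ so that no absolute value is lost. The interchange of expectation with the conditionally convergent $V$-series is the only analytic subtlety.
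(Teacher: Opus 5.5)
Your proposal is correct and follows the paper's proof essentially step for step. It uses the same per-alias substitution $x=\ln(n|\tau|/\alpha)$ to turn the mixture into $\mathfrak a_h,\mathfrak b_h$, the same monotonicity of $\mathfrak b_h$, the same $\tanh(M/2)>\rho^2$ slack, and the same triangle-inequality/Cauchy--Schwarz summation over aliases. The only differences are cosmetic: you apply the weight bounds after summing rather than per alias, you justify interchanging $\mathbb E$ with the periodization sums explicitly, and you handle $\tau=0$ by continuity where the paper simply notes that both $v$ terms vanish there.
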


\begin{proof}
Fix $\theta$ and one nonzero shifted angle $\tau=\theta+2\pi k$.
For this summand, write
\[
u_\lambda=\frac{2(\lambda/n)}{(\lambda/n)^2+\tau^2},
\qquad
v_\lambda=\frac{2\tau}{(\lambda/n)^2+\tau^2},
\qquad
x=\ln\frac{n|\tau|}{\alpha}.
\]
Substitution into \eqref{eq:continuum-ab} gives
\begin{align*}
\mathbb E[u_{\alpha e^Y}]
=\frac{2}{|\tau|}\mathfrak a_h(x),
\qquad
\mathbb E[u_{\beta e^Y}]
=\frac{2}{|\tau|}\mathfrak a_h(x-d),\\
\left|\mathbb E[v_{\alpha e^Y}-v_{\beta e^Y}]\right|
=\frac{2}{|\tau|}
\bigl(\mathfrak b_h(x)-\mathfrak b_h(x-d)\bigr).
\end{align*}
The last difference is nonnegative because $\mathfrak b_h$ is increasing.
The assumed 
inequality therefore implies
\[
\left|\mathbb E[v_{\alpha e^Y}-v_{\beta e^Y}]\right|
\le2\rho\sqrt{
\mathbb E[u_{\alpha e^Y}]\mathbb E[u_{\beta e^Y}]}.
\]
Since every rate is at least $M$,
\begin{align*}
\mathbb E[\tanh(\beta e^Y/2)u_{\beta e^Y}]
&\ge\tanh(M/2)\mathbb E[u_{\beta e^Y}],\\
\mathbb E[\coth(\alpha e^Y/2)u_{\alpha e^Y}]
&\ge\mathbb E[u_{\alpha e^Y}].
\end{align*}
Together with \eqref{eq:large-rate-slack}, these bounds give
\[
\left|\mathbb E[v_{\alpha e^Y}-v_{\beta e^Y}]\right|
\le2\sqrt{
\mathbb E[\tanh(\beta e^Y/2)u_{\beta e^Y}]
\mathbb E[\coth(\alpha e^Y/2)u_{\alpha e^Y}]}.
\]
The same argument works with $\alpha$ and $\beta$ interchanged on the
right. If $\tau=0$, both $v$ terms vanish, so the inequality holds
there as well.

Summing over $k$ by the triangle inequality and Cauchy--Schwarz, as in
the preceding proof, gives \eqref{eq:finite-determinant-condition}
for both assignments of the centers. Nonnegativity follows from \Cref{lem:finite-hybrid}.
\end{proof}

Although \eqref{eq:continuum-determinant} concerns functions on the
real line, the conclusion is nonnegativity of the finite Laurent
polynomials themselves. The identities
\eqref{eq:U-periodization}--\eqref{eq:V-periodization} introduce no
approximation.
The constant $\pi$ can already be seen in this sufficient condition.

\begin{lemma}
\label{lem:pi-speed-limit}
If \eqref{eq:continuum-determinant} holds with $\rho\le1$, then
$d\le\rho\pi\le\pi$.
\end{lemma}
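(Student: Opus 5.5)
The plan is to integrate the inequality \eqref{eq:continuum-determinant} over all $x\in\mathbb R$ and compare the two sides. The left side integrates exactly to $d$, and the right side integrates to at most $\rho\pi$ by Cauchy--Schwarz. No information about $h$ is needed beyond its being a probability density.

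\emph{Left side.} For fixed $y$, the kernel $\sigma_y(x)\coloneqq 1/(1+e^{2(y-x)})$ is increasing in $x$, tends to $0$ as $x\to-\infty$ and to $1$ as $x\to+\infty$, and approaches these limits exponentially fast. Hence
\[
\int_{\mathbb R}\bigl(\sigma_y(x)-\sigma_y(x-d)\bigr)\,dx=d .
\]
One way to see this is to write the integrand as $\int_{x-d}^{x}\sigma_y'(s)\,ds$ and swap the order of integration: each $s$ lies in a window of length $d$, and $\int\sigma_y'=1$. The integrand is nonnegative, so Tonelli lets us integrate against $h(y)\,dy$. Using $\int h=1$, this gives $\int_{\mathbb R}\bigl(\mathfrak b_h(x)-\mathfrak b_h(x-d)\bigr)\,dx=d$.

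\emph{Right side.} Cauchy--Schwarz and translation invariance of Lebesgue measure give
\[
\int_{\mathbb R}\sqrt{\mathfrak a_h(x)\mathfrak a_h(x-d)}\,dx
\le\Bigl(\int\mathfrak a_h\Bigr)^{1/2}\Bigl(\int\mathfrak a_h(\cdot-d)\Bigr)^{1/2}
=\int_{\mathbb R}\mathfrak a_h .
\]
By Tonelli again, $\int\mathfrak a_h=\int h(y)\,dy\cdot\int_{\mathbb R}\frac{dt}{2\cosh t}$. The last integral equals $\tfrac12\cdot\pi=\tfrac\pi2$, since $\int \operatorname{sech}t\,dt=[2\arctan e^t]_{-\infty}^{\infty}=\pi$.

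\emph{Conclusion.} Integrating \eqref{eq:continuum-determinant} therefore gives $d\le 2\rho\cdot\frac{\pi}{2}=\rho\pi$, and $\rho\le1$ gives $d\le\pi$.

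The only step needing care is justifying the interchanges of integration. All integrands are nonnegative: $\mathfrak b_h$ is increasing, and $\mathfrak a_h>0$. So Tonelli applies directly and there is no real obstacle. The substantive content is simply the identification of the two integrals as $d$ and $\pi/2$.
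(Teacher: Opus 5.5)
Your proposal is correct and follows essentially the same route as the paper's proof. Both integrate the hypothesis over $x$, evaluate the left side as $d$ by Tonelli and the unit mass of $\operatorname{sech}^2(u)/2$, and bound the right side by $2\rho\int\mathfrak a_h=\rho\pi$ using Cauchy--Schwarz and translation invariance. As in the paper, the argument implicitly uses the standing assumptions $d>0$ and $\rho\ge 0$ from \cref{lem:alias-transfer}.
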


\begin{proof}
Since $h$ has total mass $1$, integration gives
\[
\int_{\mathbb R}\mathfrak a_h(x)\,dx=\frac\pi2
\]
and
\[
 \int_{\mathbb R}[\mathfrak b_h(x)-\mathfrak b_h(x-d)]\,dx
 =\int_{\mathbb R} h(y)\int_0^d\int_{\mathbb R}\ell'(x-y-v)\,dx\,dv\,dy=d.
\]
where in the second identity, we used that the derivative of
$\ell(x)=(1+e^{-2x})^{-1}$ is $\operatorname{sech}^2(x)/2$, whose integral is
$1$; the integrals can be reordered by Tonelli's theorem. Thus Cauchy--Schwarz gives
\[
d\le2\rho\int_{\mathbb R}
\sqrt{\mathfrak a_h(x)\mathfrak a_h(x-d)}\,dx
\le2\rho\int_{\mathbb R}\mathfrak a_h(x)\,dx
=\rho\pi. \qedhere
\]
\end{proof}

If $h$ varies slowly over intervals of bounded length,
then one expects
\[
\mathfrak a_h(x)\approx\frac\pi2 h(x),
\qquad
\mathfrak b_h(x)-\mathfrak b_h(x-d)\approx d\,h(x).
\]
These approximations suggest that
\eqref{eq:continuum-determinant} should hold with
$\rho\approx d/\pi$, where $d=\ln(\beta/\alpha)$ is the
proposed increase in log-rate in one step. For each fixed
$0<d<\pi$, this suggests choosing a sufficiently broad density
to permit multiplicative progress by $e^d$ at sufficiently
large rates. Next we make this argument precise.

\subsection{Near-\texorpdfstring{$\pi$}{pi} compact mixtures}

We now choose the packet density
\begin{equation}
    h_W(y)
      \coloneqq Z_W^{-1}\cos^4\!\left(\frac{\pi y}{2W}\right)
         \mathbf 1_{\{|y|<W\}},
    \qquad
    Z_W\coloneqq\frac{3W}{4}.
    \label{eq:quartic-density}
\end{equation}
In the bulk of its support, $h_W$ varies only on scale~$W$ and is therefore
nearly constant on the unit scale of the kernels in
\eqref{eq:continuum-ab}.  If $h$ were exactly constant with value~$h_0$, then
formally
\[
    \mathfrak a_h(x)=\frac{\pi h_0}{2},
    \qquad
    \mathfrak b_h(x)-\mathfrak b_h(x-d)=dh_0,
\]
so \eqref{eq:continuum-determinant} would hold with
equality for $\rho=d/\pi$.
Thus a broad, nearly flat packet should permit every $d<\pi$.

The only obstruction is the boundary of the compact support.  The fourth
power in \eqref{eq:quartic-density} is chosen because, near an endpoint,
$h_W(W-s)$ is proportional to $s^4$.  The following lemma verifies that this
quartic edge retains the same factor $d/\pi$.

Put
\begin{equation}
    \ell(u)\coloneqq\frac{1+\tanh u}{2},
    \qquad
    L_d(u)\coloneqq\ell(u)-\ell(u-d),
    \label{eq:logistic-increment}
\end{equation}
and, for $r\ge0$, define
\begin{equation}
    \mathcal A_r(x)
       \coloneqq\int_0^\infty s^r\operatorname{sech}(x+s)\,ds,
    \qquad
    \mathcal D_{4,d}(x)
       \coloneqq\int_0^\infty s^4L_d(x+s)\,ds.
    \label{eq:quartic-edge-integrals}
\end{equation}

\begin{lemma}[The inequality at a quartic endpoint]
\label{lem:quartic-edge}
There is a constant $0<d_0<\pi$ such that, for
$d_0\le d\le\pi$ and every $x\in\mathbb R$,
\begin{equation}
\mathcal D_{4,d}(x)
\le\frac d\pi
\sqrt{\mathcal A_4(x)\mathcal A_4(x-d)}.
\label{eq:quartic-edge-bound}
\end{equation}
In particular, the determinant ratio
(the left-hand side of \eqref{eq:continuum-determinant}
divided by its right-hand side with $\rho=1$) 
is strictly less than $1$ in this model whenever $d<\pi$.
\end{lemma}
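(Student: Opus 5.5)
The plan is to prove \eqref{eq:quartic-edge-bound} in three regimes of $x$, using the explicit asymptotics of $\mathcal A_4$ and $\mathcal D_{4,d}$ at $x\to\pm\infty$ and a compactness argument in between. Throughout I substitute $t=x+s$, which gives
\[
\mathcal A_4(x)=\int_x^\infty (t-x)^4\operatorname{sech}t\,dt,
\qquad
\mathcal D_{4,d}(x)=\int_x^\infty (t-x)^4L_d(t)\,dt .
\]
I will use three facts about $L_d$. It is positive for $d>0$. It has total mass $d$ and is symmetric about $d/2$. Its second moment about $0$ equals $d^3/3+d\pi^2/12$, because $L_d$ is the indicator of $[0,d]$ convolved with the density $\tfrac12\operatorname{sech}^2$, and that density has variance $\pi^2/12$. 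The matching facts for the weight are $\int\operatorname{sech}=\pi$, $\int t\operatorname{sech}t\,dt=0$, and $\int t^2\operatorname{sech}t\,dt=\pi^3/4$.

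\emph{Regime $x\to+\infty$.} Here $\operatorname{sech}(t)\sim 2e^{-t}$, so $\mathcal A_4(x)\sim 48e^{-x}$ and $\sqrt{\mathcal A_4(x)\mathcal A_4(x-d)}\sim 48e^{-x+d/2}$. On the other side $L_d(t)\sim(e^{2d}-1)e^{-2t}$, so $\mathcal D_{4,d}(x)=O(e^{-2x})$. The inequality therefore holds for all $x\ge X_+$ with some explicit $X_+$. I will make $X_+$ uniform in $d\in[d_0,\pi]$ by using elementary bounds such as $\operatorname{sech}t\ge e^{-t}$ and $L_d(t)\le e^{2d}e^{-2t}$ in place of asymptotics.

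\emph{Regime $x\to-\infty$.} Put $X=-x$ and expand $(t+X)^4$ by the binomial theorem. Replacing the lower limit $x$ by $-\infty$ costs only $O(X^4e^{-X})$. This gives
\[
\mathcal D_{4,d}=d\bigl(X^4+2dX^3+(2d^2+\tfrac{\pi^2}{2})X^2\bigr)+O(X).
\]
Similarly $\mathcal A_4(x)=\pi X^4+\tfrac{3\pi^3}{2}X^2+O(1)$, and the same with $X$ replaced by $X+d$ for $\mathcal A_4(x-d)$. Taking the geometric mean, the right-hand side $\tfrac d\pi\sqrt{\cdots}$ equals
\[
d\bigl(X^4+2dX^3+(d^2+\tfrac{3\pi^2}{2})X^2\bigr)+O(X).
\]
The leading two orders agree. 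The gap at order $X^2$ is $d(\pi^2-d^2)$, which is positive for $d<\pi$. This is exactly where the factor $d/\pi$ is sharp.

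The difficulty is that this gap vanishes at $d=\pi$, and the claim includes $d=\pi$. For $d$ near $\pi$ I would therefore carry the expansion one and two orders further, to the $X^1$ and $X^0$ terms. This uses the odd moments of $L_d$ about $d/2$ (which vanish by symmetry), its fourth central moment, and the fourth moment $\int t^4\operatorname{sech}t\,dt=5\pi^5/16$. I expect the resulting coefficient to be strictly favorable at $d=\pi$. I would then use continuity in $d$ to obtain a threshold $X_-$ that is uniform for $d\in[d_0,\pi]$.

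This is the main obstacle. The sign at $d=\pi$ is decided by higher moments rather than by a clean structural fact, and the bookkeeping has to produce an explicit error bound, not just an asymptotic. If the sign comes out wrong at $d=\pi$, the fallback is to restrict to $d\in[d_0,d_1]$ with $d_1<\pi$. That still suffices for the application, which only needs every fixed $d<\pi$ close to $\pi$.

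\emph{Compact middle regime} $x\in[-X_-,X_+]$. Both sides are continuous and positive in $(x,d)$. I would show that the ratio
\[
\frac{\pi\,\mathcal D_{4,d}(x)}{d\sqrt{\mathcal A_4(x)\mathcal A_4(x-d)}}
\]
is at most $1$ on the compact rectangle $[-X_-,X_+]\times[d_0,\pi]$. The method is a certified grid evaluation, using interval arithmetic for the one-dimensional integrals together with Lipschitz bounds in $x$ and $d$ obtained from derivatives under the integral sign.

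The constant $d_0$ enters here. For small $d$ the ratio need not stay below $1$ at moderate $x$: the quartic edge does not behave like a flat density there. So $d_0$ is chosen as large as the verification requires.

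The final sentence of the lemma, that the determinant ratio is strictly less than $1$ whenever $d<\pi$, then follows from two observations. The $x\to\pm\infty$ analyses give strict inequality with margin. The ratio is continuous and bounded away from $1$ on compacts by the same verification.
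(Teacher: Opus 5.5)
\textbf{Where the proposal falls short: the compact middle regime.}

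Your endpoint asymptotics are correct, and the case $d=\pi$ as $x\to-\infty$ comes out favorably. Put $Y=d/2-x$ and use the symmetry of $L_d$ about $d/2$. Up to exponentially small terms, the left side of \eqref{eq:quartic-edge-bound} is $d\bigl[Y^4+\tfrac{\pi^2+d^2}{2}Y^2+\tfrac{d^4}{80}+\tfrac{\pi^2d^2}{24}+\tfrac{7\pi^4}{240}\bigr]$. The right side is $d\bigl[Y^4+\tfrac{3\pi^2-d^2}{2}Y^2+\tfrac{d^4}{16}+\tfrac{3\pi^2d^2}{8}+\tfrac{5\pi^4}{16}\bigr]+O(Y^{-2})$. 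So at $d=\pi$ the surplus is the constant $\tfrac{2\pi^5}{3}>0$. The ``in particular'' clause needs no separate argument: it follows from the bound because $d/\pi<1$.

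The genuine gap is the middle regime, which is where the whole content of the lemma lies. You leave it to a certified grid computation that you have not carried out and give no structural reason to expect it to succeed. The computation is also more delicate than your sketch suggests:
\begin{itemize}
\item The normalized ratio $\pi\mathcal D_{4,d}(x)/\bigl(d\sqrt{\mathcal A_4(x)\mathcal A_4(x-d)}\bigr)$ tends to $1$ as $x\to-\infty$ for every $d$. Near the left edge of your window the margin is therefore only of relative order $X_-^{-2}$, and of order $X_-^{-4}$ at $d=\pi$.
\item Gluing the regimes needs explicit, $d$-uniform remainder bounds for the expansion above.
\item Interval arithmetic can only close if the inequality is strict at every finite $x$ when $d=\pi$, and you have no argument for that.
\end{itemize}

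\textbf{The idea you are missing.} The paper handles every $x$ at once with one structural argument.
\begin{itemize}
\item It builds $G_d=\tfrac d\pi\bigl[f-\tfrac d2f'+\tfrac{2d^2-\pi^2}{12}f''+\tfrac{d(\pi^2-d^2)}{24}f'''\bigr]$ with $f=\operatorname{sech}$, chosen to match the first four moments of $L_d$.
\item It shows that $H_d=G_d-L_d$ has exactly four sign changes and positive tails. This uses Descartes' rule on an explicit polynomial in $e^u$ at $d=\pi$, followed by a perturbation argument for $d$ near $\pi$.
\item Cubic interpolation at those four zeros then gives $\int\varphi L_d\le\int\varphi G_d$ for every $\varphi$ with $\varphi^{(4)}\ge0$, in particular $\varphi(u)=(u-x)_+^4$ for all $x$ simultaneously.
\item Integrating by parts turns $\int(u-x)_+^4G_d$ into $\tfrac d\pi\bigl[\mathcal A_4+2d\mathcal A_3+(2d^2-\pi^2)\mathcal A_2-d(\pi^2-d^2)\mathcal A_1\bigr]$. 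Dropping the nonpositive terms bounds this by $\tfrac d\pi\int_0^\infty s^2(s+d)^2\operatorname{sech}(x+s)\,ds$.
\item Cauchy--Schwarz finishes the proof. Each step is strict, which also settles the strictness question above.
\end{itemize}
Without this comparison, or a rigorous computation that is actually executed, your proposal does not establish the inequality at moderate $x$.
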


\begin{proof}
The proof compares $L_d$ with a combination of $\operatorname{sech}$
and its derivatives. We choose this combination to have the same
first four moments as $L_d$. A sign calculation then shows that it
has a larger integral against $(u-x)_+^4$, where
$(x)_+ \coloneqq \max\{x,0\}$. Integration by parts and Cauchy--Schwarz
give the desired bound.

\emph{Matching the moments.}
Let $f(u)=\operatorname{sech}u$ and put
\begin{equation}
G_d\coloneqq\frac d\pi\left[
f-\frac d2 f'
+\frac{2d^2-\pi^2}{12}f''
+\frac{d(\pi^2-d^2)}{24}f'''
\right],
\qquad H_d \coloneqq G_d-L_d.
\label{eq:Gd-definition}
\end{equation}
The coefficients are chosen so that
\begin{equation}
\int_{\mathbb R}u^jH_d(u)\,du=0
\qquad(0\le j\le3).
\label{eq:four-vanishing-moments}
\end{equation}
Explicitly, the moments of $f$ through degree
three are $\pi,0,\pi^3/4,0$. Also,
$L_d(u)=\int_0^d\ell'(u-v)\,dv$, where
$\ell'(u)=\operatorname{sech}^2(u)/2$ has mass one, mean zero, and
second moment $\pi^2/12$. It follows that the corresponding moments
of $L_d$ are
\[
d,\qquad \frac{d^2}{2},\qquad
\frac{d^3}{3}+\frac{d\pi^2}{12},\qquad
\frac{d^4}{4}+\frac{d^2\pi^2}{8}.
\]
Integration by parts in \eqref{eq:Gd-definition} gives precisely
these four values for $G_d$, proving
\eqref{eq:four-vanishing-moments}.

\emph{Determining the signs.}
We first take $d=\pi$, for which
$G_\pi=f-(\pi/2)f'+(\pi^2/12)f''$.
With $z=e^u>0$, multiplication by a positive denominator shows that
$H_\pi(u)$ has the sign of
\begin{align}
P(z)&\coloneqq\left[
2-\pi+\frac{\pi^2}{6}
+(4-\pi^2)z^2
+\left(2+\pi+\frac{\pi^2}{6}\right)z^4
\right](1+e^{-2\pi}z^2)\notag\\
&\hspace{2em}-(1-e^{-2\pi})z(1+z^2)^2.
\label{eq:quartic-sign-polynomial}
\end{align}
The nonzero coefficient signs, in descending order, are
$+,-,+,-,-,-,+$, so Descartes' rule gives at most four positive
zeros, counted with multiplicity. Direct substitution gives the signs
\[
\begin{array}{c|ccccc}
z&1/10&1/2&2&10&100\\ \hline
\operatorname{sgn}P(z)&+&-&+&-&+
\end{array}
\]
and hence there are exactly four positive zeros, all simple.
Therefore $H_\pi$ has exactly four sign changes and is positive in
both tails.

The same pattern holds for $d$ sufficiently close to $\pi$. To see
this, take disjoint small intervals around the four zeros. Simplicity
of the zeros preserves one crossing in each interval under a small
change in $d$. On any remaining compact set, $H_\pi$ is bounded away
from zero, so its sign is unchanged. Finally, the tails remain
positive uniformly for $d$ near $\pi$: $L_d(u)=O(e^{-2|u|})$, whereas
$G_d(u)/e^u$ as $u\to-\infty$ and $G_d(u)/e^{-u}$ as $u\to\infty$
have limits continuous in $d$ and equal at $d=\pi$ to
\[
2-\pi+\frac{\pi^2}{6}>0,
\qquad
2+\pi+\frac{\pi^2}{6}>0,
\]
respectively. The tail expansions are uniform on a fixed neighborhood
of $\pi$. Choosing $d_0<\pi$ sufficiently close to $\pi$ therefore
gives exactly four simple sign changes, with positive tails, for
every $d\in[d_0,\pi]$.

\emph{Using the moments and signs.}
Fix such a $d$, and let $u_1<u_2<u_3<u_4$ be the four zeros of $H_d$.
For a smooth function $\varphi$ with $\varphi^{(4)}\ge0$, let $p$ be
its cubic interpolating polynomial at these zeros. The interpolation
remainder has the sign of
$\prod_{i=1}^4(u-u_i)$, or is zero. This is also the sign of $H_d(u)$.
Thus $(\varphi-p)H_d\ge0$, and the vanishing moments imply
\[
\int_{\mathbb R}\varphi(u)H_d(u)\,du
=\int_{\mathbb R}(\varphi(u)-p(u))H_d(u)\,du\ge0
\]
whenever these integrals converge.
The same conclusion holds for $\varphi(u)=(u-x)_+^4$, by smoothing
this function with a nonnegative smooth kernel and passing to the
limit. Its fourth derivative is nonnegative, and its polynomial
growth is integrable against the exponentially decaying $H_d$.

Applying the comparison to this function and integrating by parts
gives
\begin{align}
\mathcal D_{4,d}(x)
&\le\frac d\pi\left[
\mathcal A_4(x)+2d\mathcal A_3(x)
+(2d^2-\pi^2)\mathcal A_2(x)
-d(\pi^2-d^2)\mathcal A_1(x)
\right]\notag\\
&\le\frac d\pi\left[
\mathcal A_4(x)+2d\mathcal A_3(x)+d^2\mathcal A_2(x)
\right]\notag\\
&=\frac d\pi\int_0^\infty
s^2(s+d)^2\operatorname{sech}(x+s)\,ds.
\label{eq:D-B-bound}
\end{align}
For the second inequality, use $d\le\pi$ and
$\mathcal A_1,\mathcal A_2\ge0$. Finally, Cauchy--Schwarz gives
\begin{align*}
\int_0^\infty s^2(s+d)^2\operatorname{sech}(x+s)\,ds
&\le\sqrt{
\mathcal A_4(x)
\int_0^\infty(s+d)^4\operatorname{sech}(x+s)\,ds}\\
&\le\sqrt{\mathcal A_4(x)\mathcal A_4(x-d)}.
\end{align*}
The last inequality follows by setting $t=s+d$ and extending the
resulting integral from $t\ge d$ to $t\ge0$.
\end{proof}

We can now establish the condition of \cref{lem:alias-transfer} for the density $h_W$.

\begin{proposition}[Near-$\pi$ compact mixtures]
\label{prop:compact-mixtures}
There exist sequences $W_m\to\infty$, $d_m\uparrow\pi$, and $\rho_m<1$ such
that the density $h_{W_m}$ satisfies
\begin{equation}
    |\mathfrak b_{h_{W_m}}(x)-\mathfrak b_{h_{W_m}}(x-d_m)|
       \le2\rho_m
          \sqrt{
             \mathfrak a_{h_{W_m}}(x)
             \mathfrak a_{h_{W_m}}(x-d_m)}
    \label{eq:compact-mixture-certificate}
\end{equation}
for every $x\in\mathbb R$.
\end{proposition}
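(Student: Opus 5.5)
The plan is to fix a target $d<\pi$ close to $\pi$, with $d\ge d_0$ as in \Cref{lem:quartic-edge}. We then show that for all sufficiently large $W$ the inequality \eqref{eq:compact-mixture-certificate} holds with some $\rho<1$ that depends only on $d$. Taking $d_m\uparrow\pi$ and choosing $W_m$ large enough for each $m$ gives the sequences.

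Since $\mathfrak b_h$ is increasing, the absolute value can be dropped. Write $\mathfrak b_h(x)-\mathfrak b_h(x-d)=\int h(y)L_d(x-y)\,dy$ and $\mathfrak a_h(x)=\int h(y)\tfrac12\operatorname{sech}(x-y)\,dy$. The kernels $L_d$ and $\operatorname{sech}$ decay like $e^{-2|u|}$ and $e^{-|u|}$ respectively, and have unit scale, while $h_W$ varies on scale $W$. I would split $x$ into a bulk region $|x|\le W-K$ and two edge regions $|x|\ge W-K$, where $K=K(W)\to\infty$ slowly (for example $K=\sqrt W$). By the symmetry $y\mapsto -y$ together with the evident reflection identities $L_d(-u)=L_d(u-d)$, it suffices to treat one edge, say near $y=W$.

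\emph{Bulk.} Taylor expand $h_W(y)=h_W(x)+h_W'(x)(y-x)+O(W^{-2}\sup|h''|\,(y-x)^2)$ against the kernels. The zeroth-order terms give $d\,h_W(x)$ and $\tfrac\pi2 h_W(x)$, and the first-order terms contribute relative errors $O(|h'|/h)$. Near the edge, $h_W(x)\asymp (W-|x|)^4/W^5$ and $|h_W'|/h_W\lesssim 1/(W-|x|)\le 1/K$. Similarly $\sqrt{\mathfrak a(x)\mathfrak a(x-d)}=\tfrac\pi2 h_W(x)(1+O(1/K))$. Hence the ratio is $\frac{d}{\pi}(1+O(1/K))$, which is at most $\rho$ for any fixed $\rho\in(d/\pi,1)$ once $W$ is large. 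The exponentially small tail contributions from $|y-x|>K/2$ are also controlled relative to $h_W(x)$ by the polynomial lower bound on $h_W$.

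\emph{Edge.} Set $x=W-t$ with $|t|\le K$ (and also $t<0$, outside the support). Substituting $y=W-s$ and using $\cos^4(\pi(W-s)/2W)=\sin^4(\pi s/2W)=(\pi s/2W)^4(1+O(s^2/W^2))$ gives
\[
\mathfrak b(x)-\mathfrak b(x-d)\approx c_W\!\int_0^\infty s^4L_d(s-t)\,ds=c_W\,\mathcal D_{4,d}(-t),\qquad
\mathfrak a(x)\approx \tfrac{c_W}{2}\mathcal A_4(-t),
\]
with $\mathfrak a(x-d)\approx\tfrac{c_W}{2}\mathcal A_4(-t-d)$. The key point is that the quartic edge, to leading order, yields exactly the model of \Cref{lem:quartic-edge}. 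That lemma gives ratio at most $d/\pi<1$ at every $t$; note that the lemma already includes outside-support $x$, where $t<0$. The corrections are relative $O(K^2/W^2)$ from the $\sin^4$ expansion, together with contributions from the far part of the support, which are of size $e^{-\Omega(W)}$ times $c_W$. Both are uniform because $\mathcal D_{4,d}$ and $\mathcal A_4$ are positive and comparable to each other.

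The main obstacle is exactly this uniformity far outside the support, as $t\to-\infty$. There everything decays like $e^{-2|t|}$ for $\mathcal D$ versus $e^{-|t|}$ for $\mathcal A$, so the left side is in fact relatively smaller. Still, the approximation errors must be shown to be relative errors, not absolute ones. I would handle this by bounding the error integrands pointwise by $O(s^2/W^2)$ times the main integrands, which keeps every error multiplicative. The overlap region $K/2\le t\le K$ is covered by both the bulk and edge estimates, so choosing $\rho=\tfrac12(1+d/\pi)$ finishes the argument for large $W$.
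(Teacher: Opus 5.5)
Your proposal is correct and uses the same two ingredients as the paper. Away from the endpoints, $h_W$ is essentially constant on the unit scale of the kernels, so the ratio tends to $d/\pi$. Near an endpoint, the quartic profile $W^5h_W(W-s)\to\frac{\pi^4}{12}s_+^4$ reduces the ratio to the model of \Cref{lem:quartic-edge}.

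The difference is in execution. The paper argues by contradiction along sequences $W_j\to\infty$, $x_j$, and sorts them into three limiting regimes: deep interior, bounded distance from an endpoint, and far outside the support. It passes to limits by dominated convergence, and disposes of the far-outside regime separately via the tail asymptotics $\mathfrak a_{h_W}(x)=\Theta(e^{-x}I_1)$ and $h_W*L_d(x)=O(e^{-2x}I_2)$. You instead give direct uniform estimates with a cutoff $K(W)$, and fold the far-outside points into the edge comparison. Your route yields an explicit rate, $\rho=\frac d\pi(1+o(1))$ with a quantified error, at the cost of bookkeeping. The paper's compactness argument is shorter but gives no rate.

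A few details need repair, none serious.

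(i) The reflection identity is $L_d(-u)=L_d(u+d)$, i.e.\ $L_d(d-u)=L_d(u)$, not $L_d(-u)=L_d(u-d)$. Together with evenness of $h_W$ and $\operatorname{sech}$, it makes the ratio invariant under $x\mapsto d-x$, which swaps the two edges.

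(ii) In the bulk, the second-order Taylor remainder must use the size of $h_W''$ on the window $|y-x|\le K/2$. That size is $O(h_W(x)/(W-|x|)^2)$, by the same computation that gives $|h_W'|/h_W\lesssim 1/(W-|x|)$. A global bound $\sup|h_W''|\asymp W^{-3}$, compared against $h_W(x)\asymp K^4/W^5$ at the inner boundary, gives relative error $W^2/K^4$. That is not small for $K=\sqrt W$. Either use the local bound or take, e.g., $K=W^{3/4}$.

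(iii) Far outside the support, $\mathcal D_{4,d}$ and $\mathcal A_4$ are not comparable: they decay like $e^{-2\xi}$ and $e^{-\xi}$, with $\xi=-t$. Nothing requires comparability, and the obstacle you identify largely disappears. Since $\sin^4u\le u^4$ and $L_d>0$, the model $c_W\mathcal D_{4,d}(\xi)$ is an exact upper bound for the numerator, so only the two $\mathfrak a$ factors need lower bounds. For those, take $u=\pi s/(2W)$. Use $u^4-\sin^4u\le Cu^6$ on $[0,\pi]$, and $u^4\le u^6/\pi^2$ for $u\ge\pi$ (this covers $s>2W$). This gives relative error $O(W^{-2}\mathcal A_6(\xi)/\mathcal A_4(\xi))=O((1+K^2)/W^2)$, uniformly for $\xi\ge-K-d$.
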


\begin{proof}
Write $k(u)\coloneqq\operatorname{sech}(u)/2$.  Then
\[
    \mathfrak a_h=h*k,
    \qquad
    \mathfrak b_h(x)-\mathfrak b_h(x-d)=h*L_d(x).
\]
Fix any $d<\pi$ sufficiently close to~$\pi$.  We claim that the strict form of
\eqref{eq:continuum-determinant} holds for $h_W$ once $W$ is sufficiently
large.  Suppose otherwise and choose $W_j\to\infty$ and points $x_j$ where
the determinant ratio is at least~$1$.  Put $t_j=x_j-d/2$.  There are three
possible limiting regimes.

First suppose $t_j$ lies inside the support and its distances from both
endpoints tend to infinity.  On every fixed window,
\[
    \frac{h_{W_j}(t_j+u)}{h_{W_j}(t_j)}\longrightarrow1.
\]
The kernels decay exponentially, while concavity of the sine implies a
polynomial upper bound for the ratio. Dominated convergence therefore
shows that the determinant ratio tends to $d/\pi<1$, as predicted by the
constant-density calculation.

Second, suppose the signed distance from an endpoint stays bounded.  Since
\begin{equation}
    W^5h_W(W-s)\longrightarrow\frac{\pi^4}{12}s_+^4,
    \qquad
    W^5h_W(W-s)\le\frac{\pi^4}{12}s_+^4,
    \label{eq:quartic-edge-limit}
\end{equation}
dominated convergence reduces the determinant ratio to the quartic-edge model
of \cref{lem:quartic-edge}, where it is strictly less than~$1$.  The left
endpoint follows by reflection.

Finally, suppose $t_j$ lies outside the right endpoint at a distance tending
to infinity.  If
\[
    I_p=\int h_W(y)e^{py}\,dy,
\]
the kernel tails give
\[
    \mathfrak a_{h_W}(x)= \Theta(e^{-x}I_1),
    \qquad
    h_W*L_d(x)=O(e^{-2x}I_2).
\]
Since $I_2\le e^WI_1$, the determinant ratio is exponentially small in the
distance beyond the endpoint.  The left tail is symmetric.

These regimes exhaust all subsequences, contradicting a ratio at least~$1$.
Thus, for each fixed $d<\pi$ sufficiently close to~$\pi$, all sufficiently
large $W$ admit a uniform constant $\rho<1$.  Applying this successively to a
sequence $d_m\uparrow\pi$ gives the claim.
\end{proof}

\subsection{The endpoints and the greedy recurrence}

\Cref{lem:uniform-factor-three} guarantees progress at every positive center, while \cref{lem:alias-transfer} gives the stronger progress bound at sufficiently large centers.  The initial
center $\alpha=0$ is degenerate and is handled directly.

\begin{lemma}[First move]
\label{lem:first-move}
If $0\le\beta e^Y\le1$ almost surely, then
$q_{S_{\beta e^Y}+A_0}\ge0$. In particular, for $h=h_W$,
\[
\Gamma_S(0)\ge e^{-W}.
\]
\end{lemma}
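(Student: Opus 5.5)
The plan is to reduce to a single rate and then repeat the alias-by-alias determinant argument of \cref{lem:uniform-factor-three}, with the antisymmetric rate sent to $0$. The antisymmetric part $A_0$ does not depend on $Y$. Hence $q_{S_{\beta e^Y}+A_0}=\mathbb E\bigl[q_{S_{\lambda}+A_0}\bigr]$ with $\lambda=\beta e^Y$, since both sides are affine in the symmetric coefficient vector. It therefore suffices to show that $1+S_\lambda+A_0\ge0$ on the unit circle for every fixed $\lambda\in[0,1]$. The case $\lambda=0$ is $F_n\ge0$, so take $0<\lambda\le1$.

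\textbf{Step 1: the limiting identity.} I would let $\alpha\to0^+$ in \eqref{eq:finite-hybrid-identity} with a point mass at $\lambda$, at fixed $\theta\not\equiv0\pmod{2\pi}$. The coefficients $a_{\alpha,n}(j)\to \tfrac12-\tfrac jn$ converge, so the left side converges to $(1+S_\lambda+A_0)(e^{i\theta})$. On the right I would pass to the limit inside the periodization \eqref{eq:U-periodization}--\eqref{eq:V-periodization}, alias by alias. Writing $\tau=\theta+2\pi k\neq0$ and $b=\lambda/n$, one gets:
\[
\coth(\alpha/2)\,u_{\alpha/n}\to\frac{4}{n\tau^2},
\qquad
v_{\alpha/n}\to\frac{2}{\tau}.
\]
The interchange of limit and sum is justified because, for $\alpha\le1$, the sum $\sum_k$ is dominated by $O(\tau^{-2})$. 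This gives a quadratic form in $\cos(n\theta/2),\sin(n\theta/2)$ whose off-diagonal coefficient is $\sum_k\bigl(2/\tau-v_b\bigr)$, with $v_b=2\tau/(b^2+\tau^2)$. Its diagonal coefficients are $\tanh(\lambda/2)\sum_k u_b$ and $\sum_k 4/(n\tau^2)$. The point $\theta\equiv0$ is then handled by continuity of the trigonometric polynomial.

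\textbf{Step 2: the per-alias bound.} As in \cref{lem:uniform-factor-three}, it suffices to prove for each alias that
\[
\Bigl(\frac2\tau-\frac{2\tau}{b^2+\tau^2}\Bigr)^2\le 4\tanh(\lambda/2)\,\frac{2b}{b^2+\tau^2}\cdot\frac{4}{n\tau^2}.
\]
The left side equals $4b^4/\bigl(\tau^2(b^2+\tau^2)^2\bigr)$. For $\lambda\le1$, concavity of $\tanh$ gives $\tanh(\lambda/2)\ge c\lambda$ with $c=\tanh(1/2)\approx0.46$. Substituting $\lambda=nb$, the right side is at least $32c\,b^2/\bigl(\tau^2(b^2+\tau^2)\bigr)$. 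The inequality therefore reduces to $b^2/(b^2+\tau^2)\le 8c$, which holds because the left side is at most $1<8c$. Summing over $k$ by the triangle inequality and Cauchy--Schwarz, exactly as before, yields the determinant condition. By \cref{lem:finite-hybrid}, the form has positive diagonal, so it is nonnegative.

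\textbf{Step 3: the bound on $\Gamma_S(0)$.} For $h=h_W$ we have $|Y|<W$, so choosing $\beta=e^{-W}$ makes every rate $\beta e^Y\le1$. The first part then gives nonnegativity, and so $\Gamma_S(0)\ge e^{-W}$.

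The main obstacle is Step 1: making the $\alpha\to0$ limit rigorous, since $\coth(\alpha/2)$ blows up while $U_\alpha$ vanishes. If the periodized limit proves awkward, my fallback is to sum the geometric series for $a_{0,n}$ directly to obtain the same limiting identity.
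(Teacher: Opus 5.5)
Your proof is correct and follows the paper's route. You reduce to a single rate by affinity in the symmetric coefficients, let the antisymmetric rate tend to $0$ in the identity of \cref{lem:finite-hybrid}, verify the determinant condition for $0<\lambda\le1$, and average. The only difference is how the determinant condition is checked. The paper does it in closed form, reducing it to $4\tanh(\lambda/2)\ge n\tanh(\lambda/(2n))$. There the limits $\coth(\alpha/2)U_\alpha\to 2/(n(1-\cos\theta))$ and $V_\alpha\to-\cot(\theta/2)$ are immediate, so the obstacle you anticipate in Step 1 does not arise. You instead check it alias by alias as in \cref{lem:uniform-factor-three}; both versions rest on the same scalar fact $\lambda\le 8\tanh(\lambda/2)$ for $\lambda\in[0,1]$.
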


\begin{proof}
For a single rate, positivity reduces to
\begin{equation}
    4\tanh(\lambda/2)
       \ge n\tanh(\lambda/(2n)).
    \label{eq:first-move-inequality}
\end{equation}
For $0\le\lambda\le1$, the right-hand side is at most $\lambda/2$, while the
left-hand side is larger.  The mixture is an average of these nonnegative
single-rate hybrids and is therefore nonnegative.
\end{proof}

At the other endpoint, it is unnecessary to send the solution family rates literally
to infinity.  Once the remaining coefficients have sufficiently small
$\ell_1$-mass, two parity-respecting moves remove them exactly.

\begin{lemma}[Two-query exactifier]
\label{lem:two-query-exactifier}
Suppose a feasible path ends at $R(z)=1+\sum_{j=1}^{n-1}a_j(z^j+z^{-j})$
and
\begin{equation}
    2\sum_{j=1}^{n-1}|a_j|<1.
    \label{eq:exactifier-l1}
\end{equation}
Then two additional queries terminate exactly at the constant polynomial~$1$.
\end{lemma}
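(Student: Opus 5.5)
The plan is to build the two extra polynomials explicitly and check nonnegativity with the $\ell_1$ bound. Split the coefficient vector of $R$ as $a=P_+a+P_-a$, and denote the components $a^+$ and $a^-$. The final polynomial $1$ has zero coefficient vector. Any path from $a$ to $0$ in two parity-respecting steps must pass through either $a^-$ (first removing the symmetric part) or $a^+$ (first removing the antisymmetric part). Which order is forced depends on the parity of the index of the next step under \eqref{eq:alternating-parity}. So we take the intermediate polynomial to be
\[
R'(z)=1+\sum_{j=1}^{n-1}a'_j(z^j+z^{-j}),\qquad a'\in\{a^+,a^-\},
\]
chosen according to the parity of the next step. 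The step $a\to a'$ changes only the permitted component. The step $a'\to 0$ then changes only the other component, which matches the alternation.

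Next I would verify the conditions of \Cref{prop:ordered-search} for $R'$ and for $1$. Both are symmetric Laurent polynomials of degree less than $n$ with constant term $1$, so the normalization $\int q_t=2\pi$ holds. The constant polynomial $1$ is trivially nonnegative, so the only substantive check is $R'\ge0$ on the unit circle.

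For that, bound $\|a^\pm\|_1\le\frac12(\|a\|_1+\|Ja\|_1)=\|a\|_1$, since $J$ merely permutes coordinates. On $|z|=1$ we have $|z^j+z^{-j}|\le 2$, so
\[
R'(z)\ge 1-2\sum_j|a'_j|\ge 1-2\sum_j|a_j|>0
\]
by \eqref{eq:exactifier-l1}. This gives strict positivity, and the argument is complete.

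The only potential obstacle is bookkeeping: making sure the parity of the first extra step matches whichever component we drop first. Since either order works by the symmetric bound above, this causes no difficulty.
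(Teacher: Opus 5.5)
Your proposal is correct and matches the paper's proof. You zero out the component permitted by the next query, use $\|P_\pm a\|_1\le\|a\|_1$ (since $J$ permutes coordinates) to get $R'\ge 1-2\|a\|_1>0$ on the unit circle, and remove the remaining component with the following query; your remark that the parity alternation forces the choice of intermediate polynomial is correct bookkeeping that the paper leaves implicit.
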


\begin{proof}
Set the component permitted by the next query to zero, leaving the
other component unchanged. Since $J$ permutes coordinates,
\[
\|P_\pm a\|_1
=\left\|\frac{a\pm Ja}{2}\right\|_1\le\|a\|_1.
\]
The resulting polynomial is therefore at least
$1-2\|a\|_1>0$ everywhere on the unit circle. The following query
sets its remaining component to zero. Both increments have the
required reversal parity.
\end{proof}

\begin{lemma}[Large-rate decay]
\label{lem:large-rate-decay}
If $\lambda\ge Dn>0$, then
\begin{equation}
\sum_{j=1}^{n-1}g_{\lambda,n}(j)
\le\frac1{(e^D-1)(1-e^{-2Dn})}.
\label{eq:large-rate-sum}
\end{equation}
In particular, $D=2$ suffices for the following conclusion: if
$\alpha e^Y\ge Dn$ and $\beta e^Y\ge Dn$ almost surely, the
coefficients of $q_{S_{\alpha e^Y}+A_{\beta e^Y}}$ satisfy
\eqref{eq:exactifier-l1}.
\end{lemma}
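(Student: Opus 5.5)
The plan has two parts: first bound the single-rate coefficient sum \eqref{eq:large-rate-sum} directly, then transfer the bound to the mixture through the component formulas \eqref{eq:green-parts}. Write $r=e^{-\lambda/n}\le e^{-D}$. From the proof of \cref{lem:green-positive},
\[
g_{\lambda,n}(j)=\frac{r^j-r^{2n-j}}{1-r^{2n}}\le\frac{r^j}{1-r^{2n}}.
\]
Summing the geometric series gives
\[
\sum_{j=1}^{n-1}g_{\lambda,n}(j)\le\frac{r}{(1-r)(1-r^{2n})}=\frac{1}{(r^{-1}-1)(1-r^{2n})}.
\]
Since $r^{-1}\ge e^D$ and $r^{2n}\le e^{-2Dn}$, this is at most $1/((e^D-1)(1-e^{-2Dn}))$, which is \eqref{eq:large-rate-sum}.

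For the mixture, the nonconstant coefficient vector of $q_{S_{\alpha e^Y}+A_{\beta e^Y}}$ is $\mathbb E[s_{\alpha e^Y}]+\mathbb E[a_{\beta e^Y}]$. By the triangle inequality its $\ell_1$ norm is at most
\[
\mathbb E\,\|s_{\alpha e^Y}\|_1+\mathbb E\,\|a_{\beta e^Y}\|_1 .
\]
The key step is to bound each component norm by the full single-rate sum $\|g_{\lambda}\|_1$, which needs an entrywise bound rather than the cruder estimate from $\|P_\pm g\|_1\le\|g\|_1$. From \eqref{eq:green-parts}, $|a_{\lambda}(j)|=|g_\lambda(j)-g_\lambda(n-j)|/2$ and $s_\lambda(j)=(g_\lambda(j)+g_\lambda(n-j))/2$. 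Each of these is at most $\tfrac12(g_\lambda(j)+g_\lambda(n-j))$, because $g_\lambda\ge0$. Summing over $j$ and using that $j\mapsto n-j$ permutes $\{1,\dots,n-1\}$ gives $\|s_\lambda\|_1\le\|g_\lambda\|_1$ and $\|a_\lambda\|_1\le\|g_\lambda\|_1$.

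It follows that the total $\ell_1$ mass is at most
\[
2\sup_{\lambda\ge Dn}\sum_j g_{\lambda,n}(j)\le \frac{2}{(e^D-1)(1-e^{-2Dn})}.
\]
Condition \eqref{eq:exactifier-l1} then requires $2\cdot 2/((e^D-1)(1-e^{-2Dn}))<1$. With $D=2$ we have $e^2-1\approx6.39$, and $1-e^{-4n}\ge1-e^{-8}$ for $n\ge2$. The left side is therefore about $4/6.39\approx0.63<1$.

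The main obstacle is keeping the constants sharp. Combining the symmetric and antisymmetric parts costs a factor $2$, and condition \eqref{eq:exactifier-l1} itself carries a further factor $2$. The numerical margin $0.63<1$ absorbs both, so I expect no real difficulty.
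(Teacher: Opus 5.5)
Your proposal is correct and follows essentially the same route as the paper: you bound $g_{\lambda,n}(j)\le r^j/(1-r^{2n})$ and sum the geometric series, use $g\ge0$ to get $\|s_\lambda\|_1,\|a_\lambda\|_1\le\sum_j g_{\lambda,n}(j)$, apply the triangle inequality over the mixture, and check numerically that $4/((e^2-1)(1-e^{-8}))<1$. One small correction: the projection bound $\|P_\pm g\|_1\le\|g\|_1$ is not cruder than your entrywise argument, since it gives the identical estimate (the paper uses it in the two-query exactifier), so either suffices.
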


\begin{proof}
The identity
\[
g_{\lambda,n}(j)
=\frac{e^{-\lambda j/n}-e^{-2\lambda+\lambda j/n}}
{1-e^{-2\lambda}}
\le\frac{e^{-Dj}}{1-e^{-2Dn}}
\]
and a geometric-series sum prove \eqref{eq:large-rate-sum}.
Because $g_{\lambda,n}(j)\ge0$,
\[
\sum_j|s_{\lambda,n}(j)|=\sum_j g_{\lambda,n}(j),
\qquad
\sum_j|a_{\lambda,n}(j)|\le\sum_j g_{\lambda,n}(j).
\]
Taking expectations and using the triangle inequality gives
\[
2\sum_{j=1}^{n-1}
\left|S_{\alpha e^Y}(j)+A_{\beta e^Y}(j)\right|
\le\frac4{(e^D-1)(1-e^{-2Dn})}.
\]
For $D=2$ and $n\ge2$, the last expression is less than $1$.
\end{proof}

We now bound the number of steps used by the greedy recurrence using the preceding lower bounds.  This is
the only place where different physical-rate ranges enter the argument.

\begin{lemma}[Progress of the greedy recurrence]
\label{lem:greedy-recurrence}
Fix $W,d,\rho,M$, and let $h=h_W$.  Suppose
\eqref{eq:continuum-determinant} holds and
\[
    \tanh(M/2)>\rho^2.
\]
Let $D$ be the sufficiently large fixed constant furnished by
\cref{lem:large-rate-decay}, and assume that $Dn\ge M \ge 1$.
Let $\alpha_t$ be the greedy recurrence
\eqref{eq:greedy-scale-recurrence}.  The path reaches a hybrid satisfying
\eqref{eq:exactifier-l1}, and hence reaches $q=1$, using at most
\begin{equation}
    1+
    \left\lceil\frac{2W+\ln M}{\ln3}\right\rceil
    +
    \left\lceil\frac{\ln(Dn/M)}{d}\right\rceil
    +3
    \label{eq:greedy-query-bound}
\end{equation}
queries.
\end{lemma}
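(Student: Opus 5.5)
We track the center $\alpha_t$ through three phases: the first move, a bootstrapping phase driven by the factor-three bound, and a fast phase driven by the factor-$e^d$ bound. After that, two more queries reach $q=1$. Two facts are used throughout. Each transition map is monotone in its argument, and each step can only increase the center. Since $\Gamma_S(\alpha),\Gamma_A(\alpha)\ge\alpha$ by definition, the sequence $\alpha_t$ is nondecreasing. So any lower bound on how far a single step can push the center also lower-bounds the greedy step. Also, the support of $h_W$ lies in $(-W,W)$, so every rate in a packet with center $\alpha$ lies in $(\alpha e^{-W},\alpha e^{W})$.

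\emph{First move.} By \cref{lem:first-move}, $\alpha_1=\Gamma_S(0)\ge e^{-W}$. This accounts for the leading $1$ in \eqref{eq:greedy-query-bound}.

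\emph{Bootstrapping.} From then on, \cref{lem:uniform-factor-three} gives $\alpha_{t}\ge 3\alpha_{t-1}$ whenever $\alpha_{t-1}>0$, whichever parity is being updated. The fast bound needs every rate in both mixtures to be at least $M$. Since the smallest rate is at least $\alpha e^{-W}$, it suffices to have center $\alpha\ge Me^{W}$. Starting from $e^{-W}$, this takes at most $\lceil (2W+\ln M)/\ln 3\rceil$ steps. One subtlety: the constraint involves both components, whose centers are $\alpha_{t-1}$ and $\alpha_t$. The older center is the binding one. I would absorb this into the $+3$ slack by running the phase until the older center exceeds $Me^W$, which costs at most one extra step.

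\emph{Fast phase.} Suppose both current centers are at least $Me^W$. Then \cref{lem:alias-transfer}, with the hypotheses \eqref{eq:continuum-determinant} and $\tanh(M/2)>\rho^2$, shows that both $q_{S_{e^d\alpha e^Y}+A_{\alpha e^Y}}$ and the mirrored polynomial are nonnegative. Here $\alpha$ is the center of the component that is not being updated. Hence $\Gamma_{S/A}(\alpha)\ge e^d\alpha$, and the centers grow by at least a factor $e^d$ per step.

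There is a technical wrinkle. In the recurrence, the updated component starts at $\alpha_{t-1}$, not at the older, unchanged center, and the map is applied to the fixed component's center $\alpha_{t-1}$. Reading \eqref{eq:greedy-scale-recurrence} literally, $\Gamma$ is applied to $\alpha_{t-1}$, the center of the component that stays fixed. So the chain is simply $\alpha_t\ge e^d\alpha_{t-1}$, and the rate condition only needs $\alpha_{t-1}e^{-W}\ge M$. We continue until every rate is at least $Dn$, meaning the center exceeds $Dne^{W}$. Going from $Me^W$ to $Dne^W$ takes $\lceil \ln(Dn/M)/d\rceil$ steps. The hypothesis $Dn\ge M$ keeps this count nonnegative.

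\emph{Termination.} Once both centers exceed $Dne^W$, \cref{lem:large-rate-decay} gives \eqref{eq:exactifier-l1}, and \cref{lem:two-query-exactifier} finishes in two queries. Together with the transition step above, this gives the final $+3$. If at some point a transition map returns $\infty$, one further query reaches $q=1$ (as remarked after \eqref{eq:greedy-polynomial-path}), so the bound still holds.

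The main obstacle is the bookkeeping. One must make sure that "all rates in both mixtures are at least $M$" (and later at least $Dn$) holds for both components at the right times. The two components lag by one step, so the thresholds have to be phrased in terms of the smaller center. I expect this to cost at most one extra query per threshold, which is what the additive constants in \eqref{eq:greedy-query-bound} absorb.
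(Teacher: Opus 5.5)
Your proposal follows the paper's proof: the first move gives $\alpha_1\ge e^{-W}$, the factor-three bound lifts the center to $Me^W$ by step $1+\lceil(2W+\ln M)/\ln 3\rceil$, and the factor-$e^d$ bound from \cref{lem:alias-transfer} lifts it to $Dne^W$ in $\lceil\ln(Dn/M)/d\rceil$ further steps. After that, one more greedy update makes the smaller center of the current hybrid exceed $Dne^W$, and \cref{lem:large-rate-decay,lem:two-query-exactifier} finish in two queries. The one thing to fix is your hedge about an extra query at the $M$ threshold: as your fast-phase paragraph itself observes, feasibility of the candidate $e^d\alpha_t$ for $\Gamma(\alpha_t)$ involves only the centers $\alpha_t$ and $e^d\alpha_t$, with the stale center $\alpha_{t-1}$ discarded, so the one-step lag costs a query only at the $Dn$ threshold. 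Charging one at each threshold would overshoot the stated bound, whose $+3$ is exactly that single update plus the two exactifier queries. Likewise, the monotonicity of $\Gamma$ you assert is neither proved nor needed, since $\Gamma(\alpha)\ge\alpha$ suffices.
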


\begin{proof}
Because $Y\in[-W,W]$, the packet centered at $e^{-W}$ has rates at
most~$1$.  \cref{lem:first-move} therefore gives
\begin{equation}
    \alpha_1=\Gamma_S(0)\ge e^{-W}.
    \label{eq:greedy-first-lower-bound}
\end{equation}
For every later finite scale, \cref{lem:uniform-factor-three} gives
\[
    \alpha_t\ge3\alpha_{t-1}.
\]
We have
\begin{equation}
    t_M \coloneqq \min\{t:\alpha_t e^{-W}\ge M\}
      \le
      1+\left\lceil\frac{2W+\ln M}{\ln3}\right\rceil.
    \label{eq:greedy-threshold-time}
\end{equation}

Once $\alpha_t e^{-W}\ge M$, every rate in the packet centered at
$\alpha_t$ is at least~$M$.  The assumed continuum certificate and
\cref{lem:alias-transfer} show that both greedy transition maps admit the
candidate $e^d\alpha_t$.  Hence
\begin{equation}
    \alpha_{t+1}\ge e^d\alpha_t
    \qquad(t\ge t_M).
    \label{eq:greedy-high-rate-recurrence}
\end{equation}
After at most
\[
    \left\lceil\frac{\ln(Dn/M)}d\right\rceil
\]
such updates, one of the scales is at least $Dn e^W$.  One further greedy
update makes this the smaller of the two component centers in the current
hybrid.  Both physical-rate supports are then at least $Dn$, so
\cref{lem:large-rate-decay} gives \eqref{eq:exactifier-l1}.
\cref{lem:two-query-exactifier} appends two final queries.  Combining
the bounds proves \eqref{eq:greedy-query-bound}.  If a greedy transition
reaches $\infty$ earlier, the path terminates sooner.
\end{proof}

\subsection{Choosing parameters}

Choose sequences $W_m,d_m,\rho_m$ from
\cref{prop:compact-mixtures}.  Choose $M_m\ge1$ so that
\begin{equation}
    \tanh(M_m/2)>\rho_m^2.
    \label{eq:choose-M}
\end{equation}
Finally, choose increasing thresholds $N_m\to\infty$ sufficiently rapidly
that
\[
    DN_m\ge M_m
\]
and
\begin{equation}
    \frac{W_m+\ln M_m+1}{\ln n}\le\frac1m
    \qquad(n\ge N_m).
    \label{eq:diagonal-threshold}
\end{equation}
Given $n$, take the largest $m=m(n)$ with $N_m\le n$.  Fix
$h=h_{W_{m(n)}}$ for the entire greedy path and use the transition maps in
\eqref{eq:greedy-S-map} and \eqref{eq:greedy-A-map}.

\begin{theorem}[Exact ordered search at the adversary constant]
\label{thm:ordered-search}
For every sufficiently large $n$, the greedy family
\eqref{eq:greedy-polynomial-path}, followed by the two-query exactifier,
gives nonnegative Laurent polynomials $q_0,\ldots,q_k$ satisfying
\eqref{eq:path-endpoints} and \eqref{eq:alternating-parity}, with
\begin{equation}
    k=\frac{\ln n}{\pi}+o(\log n).
    \label{eq:optimal-query-count}
\end{equation}
Consequently, exact quantum ordered search is possible using
\[
    \frac{1}{\pi}\ln n+o(\log n)
\]
queries.
\end{theorem}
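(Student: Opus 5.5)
The proof assembles the earlier results and reads off the query count from \cref{lem:greedy-recurrence}. Fix $n$ large and let $m=m(n)$ be the index chosen in the parameter selection, with $h=h_{W_m}$, $d=d_m$, $\rho=\rho_m$, $M=M_m$. First I will check feasibility of the path. By construction of the greedy maps \eqref{eq:greedy-S-map}--\eqref{eq:greedy-A-map}, every $q_t$ in \eqref{eq:greedy-polynomial-path} is nonnegative. Consecutive polynomials differ only in the component permitted at that step, so \eqref{eq:alternating-parity} holds. The two exactifier steps of \cref{lem:two-query-exactifier} also respect parity and end at $1$.

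The normalization in \eqref{eq:path-endpoints} holds because each $q_t$ has constant coefficient $1$, so $\int_0^{2\pi} q_t\,d\theta=2\pi$. The endpoint $q_0=F_n$ holds since $G_{0,n}=F_n$. One subtlety needs care: the path must start at $F_n$ exactly, not at a packet centered at $0$. But $\alpha_0=0$ makes every rate $0\cdot e^Y=0$, so the mixture is literally $G_{0,n}$. Also, since $\Gamma_S,\Gamma_A$ are suprema, the supremum must be attained, or I should use any admissible value at least as large as the guaranteed bound. Nonnegativity is a closed condition and the coefficients are continuous in the center, so the supremum is attained when it is finite. In the proof of \cref{lem:greedy-recurrence} only the lower bounds matter.

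Next I verify the hypotheses of \cref{lem:greedy-recurrence}. \Cref{prop:compact-mixtures} gives \eqref{eq:continuum-determinant} for $h_{W_m}$, $d_m$, $\rho_m$. The choice \eqref{eq:choose-M} gives $\tanh(M/2)>\rho^2$, and $n\ge N_m$ gives $Dn\ge M\ge1$. So the number of queries is at most
\[
k\le 4+\left\lceil\frac{2W_m+\ln M_m}{\ln 3}\right\rceil+\left\lceil\frac{\ln(Dn/M_m)}{d_m}\right\rceil
\le \frac{\ln n}{d_m}+O\bigl(W_m+\ln M_m+1\bigr).
\]
By \eqref{eq:diagonal-threshold}, the error term is $O(\ln n/m)$. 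Since $N_m\to\infty$ and $m(n)$ is the largest index with $N_m\le n$, we have $m(n)\to\infty$ as $n\to\infty$. Therefore $d_{m(n)}\uparrow\pi$, and
\[
\frac{\ln n}{d_{m(n)}}=\frac{\ln n}{\pi}+o(\ln n),\qquad O(\ln n/m(n))=o(\ln n).
\]
This gives \eqref{eq:optimal-query-count}. If the path terminates earlier, for example because a transition map reaches $\infty$, I pad it with trivial steps or note that $k$ only decreases. Padding is harmless because setting $q_{t}=q_{t-1}=1$ trivially satisfies the constraints.

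Finally, \cref{prop:ordered-search} converts the polynomial path into an exact $k$-query algorithm. The main point needing care is the diagonalization: $N_m$ must grow fast enough to absorb $W_m$ and $\ln M_m$, so the packet width and the threshold cost stay $o(\log n)$ while $d\to\pi$. The parameter choice already arranges this, so I expect no real obstacle beyond the bookkeeping above and the attainment of the greedy suprema.
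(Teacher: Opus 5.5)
Your proposal is correct and follows the paper's proof. Feasibility of the greedy path holds by construction, the query count is read off from \cref{lem:greedy-recurrence} with the diagonal parameter choice \eqref{eq:diagonal-threshold} forcing $m(n)\to\infty$ and $d_{m(n)}\to\pi$, and \cref{prop:ordered-search} converts the path into an algorithm. Your extra remarks are valid refinements the paper leaves implicit: the greedy suprema are attained because nonnegativity is a closed condition and the mixed coefficients are continuous in the center, and padding after early termination is harmless.
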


\begin{proof}
The initial polynomial is $q_0=F_n$.  Every greedy hybrid is nonnegative by
definition, and the alternating choice of transition maps gives
\eqref{eq:alternating-parity}.  \cref{lem:greedy-recurrence} and the
choice of parameters give
\begin{align*}
    k(n)
      &\le
       \frac{\ln n}{d_{m(n)}}
       +O\!\left(
          W_{m(n)}+\ln M_{m(n)}+1
        \right)\\
      &=\frac{\ln n}{\pi}+o(\log n),
\end{align*}
because $d_{m(n)}\to\pi$ and
\eqref{eq:diagonal-threshold} makes the remaining terms
$o(\log n)$.  \cref{lem:large-rate-decay,lem:two-query-exactifier} give $q_k=1$ exactly.  \cref{prop:ordered-search} then produces the corresponding exact quantum query
algorithm.
\end{proof}

\section*{Acknowledgments}

This work received support from the National Science Foundation (grant 26-17356) and the Department of Energy (grant DE-SC0020264 and the Office of Science, Office of Advanced Scientific Computing Research, Accelerated Research in Quantum Computing program).

\printbibliography
\end{document}